\begin{document}
\title{Few-Shot Channel-Agnostic Analog Coding: \\A Near-Optimal Scheme} 


\author{%
  \IEEEauthorblockN{Mohammad Ali Maddah-Ali and Soheil Mohajer}
  \IEEEauthorblockA{Department of Electrical and Computer Engineering, University of Minnesota, Minneapolis, MN, USA\\
                    Email: maddah@umn.edu, soheil@umn.edu}

}


\maketitle


\begin{abstract}
In this paper, we investigate the problem of transmitting an analog source to a destination over $N$ uses of an additive-white-Gaussian-noise (AWGN) channel, where $N$ is very small (in the order of 10 or even less). The proposed coding scheme is based on representing the source symbol using a novel progressive expansion technique, partitioning the digits of expansion into $N$ ordered sets, and finally mapping the symbols in each set to a real number by applying the reverse progressive expansion. In the last step, we introduce some gaps between the signal levels to prevent the carry-over of the additive noise from propagation to other levels. This shields the most significant levels of the signal from an additive noise, hitting the signal at a less significant level. The parameters of the progressive expansion and the shielding procedure  are opportunistically independent of the $\SNR$ so that the proposed scheme achieves a distortion $D$, where $-
\log(D)$ is within $O(\log\log(\SNR))$ of the optimal performance for all values of $\SNR$, leading to a channel-agnostic scheme. 
\end{abstract}

\section{Introduction}
Consider a joint source-channel coding problem, where the objective is to transmit some sequence of identically and independently distributed (i.i.d.) source $\{U_m\}_{m=1}^M$ to a destination using $N$-uses of a memory-less channel. The destination aims to recover $\{U_m\}_{m=1}^M$ with minimum distortion for some distortion measure function. For this problem, we can consider three different scenarios depending on $M$ and $N$, namely, long-length codes, short-length codes, and few-shot codes.  
  
In \emph{long-length codes}, where $N \rightarrow \infty$, the optimal approach is based on the separation of source coding and channel coding. That is, the transmitter first employs a lossy source coding technique to quantize the information symbols with some distortion, and then it uses channel coding to send the (index of) quantized symbols to the receiver. Optimality of separation implies that the decoded symbols satisfy the average distortion 
as long as the rate of source coding is less than the channel capacity. This scheme achieves the minimum distortion for any channel. 

In \emph{short-length codes}, $N$ is in the order of a few hundred. 
It is shown that in this regime, the separation of source coding and channel coding results in a significant loss of performance compared to the joint schemes. In addition, the optimum scheme satisfies 
${NC - MR(D) \approx \sqrt{NV+M\mathcal{V}(D)
}Q^{-1}(\epsilon)}$, where  $\epsilon$  denotes the probability error, $D$ is the distortion, $V$ represents the channel dispersion, $Q^{-1}(\cdot)$ is the inverse of the complementary cumulative distribution function of a normal random variable, and $R(D)$ and $\mathcal{V}(D)$ are the source rate-distortion and rate-dispersion functions~\cite{kostina2013lossy, polyanskiy2010channel}.

\emph{Few-shot communication}
 is applicable when the parameter $N$ is on the order of $10$ or even less. Such a regime appears in several critical, delay-sensitive applications. For example, this scenario is raised in communication systems to minimize delay in reporting time-sensitive channel state information (CSI) from the receiver to the transmitter~\cite{caire2010multiuser}. Up-to-date CSI at the transmitter is particularly important for interference management in dense wireless networks, e.g., small cells and cooperative multi-antenna communication. 
Another application of few-shot communication appears is reducing delay in reporting real-time system outputs to the controller~\cite{tatikonda2004control}. This is particularly crucial in situations where the controller and the plant are not collocated, as seen in applications like remote surgery. A significant use-case is found in massive sensor networks, where each sensor is assigned a few narrow-band time slots to report its low-rate measurements.

Few-shot codes have been investigated in the context of analog coding~\cite{shannon1949communication, cagnazzo2015shannon, timor1970design, chen1998analog, vaishampayan2003curves, skoglund2002design, skoglund2006hybrid, mittal2002hybrid, santhi2006analog, taherzadeh2012single}. However, this scenario is of particular interest in cases where the channel state information, especially the signal-to-noise ratio (SNR), is unknown at the transmitter. In the most common use-cases of few-shot codes, the delay involved in sending pilot signals, estimating the channel, and disseminating channel information is not tolerable, or the necessary resources are unavailable. Therefore, it is crucial to develop a \emph{robust} or \emph{channel-agnostic} scheme that performs optimally across all values of SNR.

The problem of designing robust schemes has been studied in~\cite{santhi2006analog, taherzadeh2012single}.
Let $\SDR=\frac{\E[(U\!-\!\E[U])^2}{D}$, and $\SNR$ be the signal-to-noise ratio. Then, the information-theoretic bound suggests that ${-\log(\SDR) \lessapprox N \log(\SNR)}$. The scheme proposed in~\cite{santhi2006analog} achieves $\frac{-\log(\SDR)} {N \log(\SNR)}\! <\! 1$, and thus  $-\log(\SDR) -N \log(\SNR)$ is in the order of $O(\log(\SNR))$. This result is significantly improved by a scheme proposed in~\cite{taherzadeh2012single},  
that achieves $-\log(\SDR) -N \log(\SNR)=O(\sqrt{\log(\SNR)})$.

In this paper, we introduce a new few-shot source-channel coding scheme that, for $M=1$ source symbol, achieves
${-\log(\SDR) -N \log(\SNR)=O(\log\log(\SNR))}$. 
This scheme is built on proposing a new expansion, named \emph{progressive expansion}, of the source $U$. Recall that in the conventional fixed-base expansion, such as binary expansion, digits are derived by recursively multiplying the residual by the base and computing the quotient. A similar process is executed to derive the digits in progressive expansion, except that the \emph{base} of the expansion grows over iterations, following a particular pattern. Then, the digits of the source expansion are partitioned into $N$ ordered sets in a round-robin scheduling manner. The digits in each set are then mapped to an analog signal by applying a reverse-progressive expansion over each set. In each level of reverse-progressive expansion, we leave the first and last alphabets unused to shield the earlier levels from the carry-over of the additive noise, hitting the signal at a higher level. The parameters of the progressive expansion and also the number of unused alphabets are judiciously designed such that the proposed scheme achieves near-optimum distortion for all values of $\SNR$, even though the achievable scheme is channel-agnostic and does not depend on $\SNR$. 
%
%
As a result, the proposed scheme remains very close to the optimum for all values of $\SNR$.

\textbf{Notation.} For $N \in \mathbb{N}$, we define ${[N] := \{1, \ldots,N\}}$. For an interval $\cI=[a,b)$ we denote its length by $|\cI|=b-a$. For a subset of $\cJ\subseteq \mathbb{R}$ that is a union of a collection of mutually disjoint intervals, i.e., $\cJ =\bigcup_{i} \cI$ we have ${|\cJ|=\sum_{i} |\cI|}$. Finally, $\log(\cdot)$ denotes logarithm with in base~$2$. 

\section{Problem Formulation}
We consider a \emph{few-shot} lossy joint source-channel coding problem, where the source $U_1, U_2, \ldots, U_M$, for some ${M\in \mathbb{N}}$, is a sequence of independently and identically distributed random variables, drawn according to some probability density function (PDF) $f_U(.)$. 
The objective is to communicate this source through ${N\in \mathbb{N}}$ uses of a memoryless AWGN channel. Motivated by delay-limited applications, we focus on the settings where $N$ is very small.

An $(M,N)$ few-shot lossy joint source-channel coding scheme consists of $N$ encoding function and a decoding function. 
For each $n=1,2,\ldots,N$, the encoding function 
\begin{align*}
    \phi_{n}: \mathbb{R}^M\rightarrow \mathbb{R}
\end{align*}
maps the source sequence $U_1, U_2, \ldots, U_M$ to a transmit symbol $\x{n}=\phi_{n} (U_1, U_2, \ldots, U_M)$. For each $n\in[N]$, the coded symbol $\x{n}$ is transmitted through an AWGN channel, 
\begin{align*}
    \y{n} = \x{n} + \z{n}, 
\end{align*}
where  $(\z{1},\ldots, \z{N})$ is an i.i.d. Gaussian sequence with zero mean and  variance~$\sigma^2$. The decoding function
\begin{align*}
    \psi:\mathbb{R}^N \rightarrow \mathbb{R}^M
\end{align*}
maps the received symbols $\y{1},\ldots, \y{N}$ to an estimate for the source sequence, that is, 
\begin{align*}
    (\hU_1, \ldots, \hU_M)= \psi( \y{1},\ldots, \y{N}).
\end{align*}
The distortion of this estimation is defined as 
\begin{align*}
    \frac{1}{M}\sum\nolimits_{m=1}^M |\hU_m-U_m|^2. 
\end{align*}
For  $P,D \in \R_+$, a tuple $(M,N,P,D)$ is said to be achievable if there exists an $(M,N)$ few-shot joint source-channel coding scheme,  such that the following constraints are satisfied:
\begin{itemize}
    \item Power constraint:
$\E\left[\frac{1}{N}\sum\nolimits_{n=1}^N |\x{n}|^2 \right]  \leq P$.

\item  Distortion constraint:
$\E\left[\frac{1}{M}\sum\nolimits_{m=1}^M |\hU_m-U_m|^2 \right] \leq D$.

\end{itemize}
\begin{remark}
    It is important to note that, here, we do not assume that $M$ and $N$ go to infinity. Motivated by delay-limited scenarios, we assume that $M$ and $N$ are small integers (e.g., $M=1$ and $N=3$). 
\end{remark}

We call an achievable scheme \emph{channel agnostic} if the encoding functions do not depend on $P$ and $\sigma$. Then, we define 
\begin{align*}
D^*(M,N,P):= \inf\{D: & (M,N,P,D) \textrm{ is achievable by a} \\ &\textrm{channel-agnostic scheme}\}.
\end{align*}

The objective of this paper is to characterize $D^*(M,N,P)$. In particular, for simplicity, we focus on $M=1$ source symbol that admits a uniform\footnote{The proposed scheme can be easily generalized to any source distribution with bounded range. For unbounded distributions (e.g., Gaussian), our scheme can be still adopted. However, the large deviation error probability should be incorporated into the final result. Also, a similar idea can be used when we have multiple source samples to be communicated. } distribution over $[-\frac{1}{2}, \frac{1}{2}]$. 

\subsection{The Fundamental Limits}
Following Shannon's theorem on the optimality of separation of source-channel coding~\cite{csiszar2011information}, we can show that for any $(M,N,P,D)$-achievable few-shot joint source-channel coding, we have 
\begin{align}\label{eq:shannon}
    M\Big(h(U)-\frac{1}{2}\log&(2\pi e D)\Big) \leq M R_U(D)\nonumber\\
    &\leq N C_Z(P)=\frac{N}{2} \log\Big(1+ \frac{P}{\sigma^2}\Big),
\end{align}
where $h(U)$ denotes the differential entropy of the source~$U$. Let $\SDR$ denote the signal to distortion ratio, defined as ${\SDR=\frac{\sigma_U^2}{D}}$ where $\sigma_U^2=\E[(U-\E[U])^2$ is the signal variance, and $\SNR$ denote the signal to noise ratio, defined as $\SNR=\frac{P}{\sigma}$. Then, from~\eqref{eq:shannon}, we have
\begin{align}\label{eq:shannon:2}
\SDR^{M} \leq c (1+\SNR)^N,
\end{align}
where $c=(2\pi e \sigma_U^2 2^{-2h(U)} )^M$. For $U\sim \mathsf{Unif}([-1/2,1/2])$, we have $h(U)=0$ and $\sigma_U^2 = 1/12$, leading to $c=(\pi e/6)^M$. 

\section{The Main Results}
Shannon's theorem guarantees that the upper bound in~\eqref{eq:shannon:2} can be asymptotically achieved by long-length code. However, for few-shot codes, the best achievable $\SDR$ is unknown. 

The main contribution of this paper is as follows:  In Section~\ref{sec:scheme} we propose a few-shot joint source-channel coding scheme (for $M=1$ source symbol).  Then, we study the performance of the proposed scheme in Section~\ref{sec:analysis}, and derive an achievable bound for $\SDR$. The following theorem states the main result of the paper. We refer to Section~\ref{sec:analysis} for the proof of Theorem~\ref{thm:main}.
\begin{theorem}\label{thm:main}
 For $M=1$ source symbol distributed as ${U\sim \mathsf{Unif}([-1/2,1/2])}$, the average distortion 
\begin{align*}
    D = c_1 \frac{1}{\SNR^N} \left(\log \SNR\right)^{10N} + c_2 \frac{1}{\SNR^N} + c_3 
\end{align*}
 is achievable for some constants $(c_1,c_2)$ that do not depend on $\SNR$ or $D$, (but may depend on $N$), and $c_3$ is order-wise smaller than $1/\SNR^N$.
 \end{theorem}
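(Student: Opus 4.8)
The plan is to construct the channel-agnostic encoder explicitly from the progressive expansion, then bound the end-to-end distortion by a careful union bound over "carry-propagation" events at each level. First I would fix the structure: given $U \in [-1/2,1/2]$, compute its progressive expansion $U = \sum_{k\ge 1} d_k / B_k$, where the partial bases $B_k$ grow according to a prescribed (SNR-independent) schedule — roughly, the $k$-th digit $d_k$ ranges over an alphabet of size $b_k$ that increases polylogarithmically in the level index, so that after $\approx N t$ levels one has captured resolution of order $2^{-t}$ per channel. The digits are dealt out round-robin into $N$ streams; stream $n$ collects $d_n, d_{n+N}, d_{n+2N},\dots$. Each stream is then re-assembled by the reverse progressive expansion into a single real $X_n \in [-1/2,1/2]$, except that at each re-assembly level we reserve the top and bottom $g_k$ symbols of the alphabet as a guard band, so a digit actually only occupies the middle portion of its slot. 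Scaling all $X_n$ by $\sqrt{P}$ (or a fixed constant times it) meets the power constraint since $\E[X_n^2] \le 1/4$.

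Next I would set up the decoder and the error analysis. The receiver sees $Y_n = \sqrt{P}\,X_n + Z_n$ with $Z_n \sim \mathcal N(0,\sigma^2)$. Decoding proceeds level by level, from most significant to least: at each step one reads off the current digit from $Y_n$ by rounding to the nearest slot center. The key claim is that, conditioned on the "good event" $\mathcal G$ that every $|Z_n|$ is smaller than roughly $\sqrt{P}$ times the guard width at the finest decoded level, the decoder recovers all digits down to level $\approx N\log(\SNR)$ exactly, because the guard bands absorb the noise and prevent a perturbation at a fine level from flipping a coarse digit. On $\mathcal G$ the residual distortion is then just the tail of the expansion past the last recovered level, which I would show is $c_1 (\log\SNR)^{10N}/\SNR^N$: the polylog factor is precisely the accumulated cost of the growing alphabet sizes and guard bands over the $\Theta(\log\SNR)$ decoded levels. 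On the complementary event $\mathcal G^c$, whose probability is $\le c_2' \exp(-c\cdot \mathrm{poly}\log\SNR)$ or at worst of order a fixed power of $1/\SNR$ by the Gaussian tail bound together with the guard-width choice, the distortion is bounded by a constant (the source has bounded range, so $|\hat U - U|\le 1$); multiplying gives the $c_2/\SNR^N$ term, and any lower-order contributions are swept into $c_3$.

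The main obstacle I anticipate is the bookkeeping that ties the guard-band widths, the alphabet-size schedule, and the number of decodable levels together so that (i) the guards are wide enough that the Gaussian noise falls inside them with probability $1 - O(1/\SNR^N)$, (ii) the guards and alphabets are narrow/small enough that the surviving resolution after $\Theta(\log\SNR)$ levels is $\SNR^{-N}$ up to only a $(\log\SNR)^{O(N)}$ loss, and (iii) none of these parameters depends on $\SNR$ — the schedule is fixed in advance, and it is only the decoder that adapts by choosing how deep to decode. Concretely, if level $k$ uses alphabet size growing like a polynomial in $k$ and guard fraction shrinking like an inverse polynomial in $k$, then summing the per-level logarithmic losses over $k \le \Theta(N\log\SNR)$ produces exactly an $O(N\log\log\SNR)$ additive gap, i.e. the $(\log\SNR)^{10N}$ prefactor. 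I would also need a short argument that the power constraint and the noise-immunity condition are simultaneously satisfiable, i.e. that the guard width at the decoding horizon is $\omega(\sigma/\sqrt{P})$ up to the polylog slack, which is where the "$10N$" (as opposed to a smaller exponent) gets spent. The remaining steps — verifying the reverse expansion is well-defined, that round-robin interleaving balances the streams, and that $c_3$ is genuinely $o(\SNR^{-N})$ — are routine.
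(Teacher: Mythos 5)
Your plan follows essentially the same route as the paper: a progressive expansion with per-level alphabets that grow with the level index, round-robin distribution of digits over the $N$ channel uses, a reverse expansion with reserved top/bottom guard symbols at every level to stop carries/borrows from propagating to coarse digits, and a good/bad-event decomposition in which the good-event error is the truncation tail at the level where the noise enters and the bad-event probability is controlled by the Gaussian tail. The only differences are bookkeeping details (the paper uses factorial bases with exactly one guard symbol per end, so the decoding depth is $\Theta(\log\SNR/\log\log\SNR)$ rather than $\Theta(\log\SNR)$, the $(\log\SNR)^{10N}$ factor arises from the five factorial levels lost between the noise threshold and the guaranteed-decoding level, and the bad-event contribution ends up being the superpolynomially small $c_3$ term rather than the $c_2/\SNR^N$ term), none of which changes the substance of the argument.
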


Focusing on the high $\SNR$, we have the following characterization for $\SDR^*=\frac{\sigma_U^2}{D^\star}$. 
 \begin{corollary}\label{cor:SDR}
 For $M=1$ source ${U\sim \mathsf{Unif}([-1/2,1/2])}$, the optimum $\SDR$ satisfies
    \begin{align}
    N\log(\SNR)&-10N \log\log(\SNR) + o(\log\log(\SNR)) \nonumber\\
    &\leq
        \log(\SDR^*) \leq N\log(1+\SNR) \nonumber.
    \end{align}
\end{corollary}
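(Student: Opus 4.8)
The plan is to read off the two inequalities from two separate ingredients that are already in place: the converse~\eqref{eq:shannon:2} supplies the right-hand (upper) bound on $\log(\SDR^{*})$, and the achievability statement of Theorem~\ref{thm:main} supplies the left-hand (lower) bound, once its distortion expression is rewritten in terms of $\log(\SDR^{*})$. Here $\SDR^{*}=\sigma_U^{2}/D^{*}(1,N,P)$ with $\sigma_U^{2}=1/12$.

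\emph{Upper bound.} I would specialize~\eqref{eq:shannon:2} to $M=1$, where it reads $\SDR\le c\,(1+\SNR)^{N}$ with $c=\pi e/6$, and note that this holds for \emph{every} $(1,N,P,D)$-achievable scheme, channel-agnostic or not, hence in particular for the supremum of $\SDR$ over channel-agnostic schemes, namely $\SDR^{*}$. Taking logarithms gives $\log(\SDR^{*})\le N\log(1+\SNR)+\log c$; since $\log c=\log(\pi e/6)$ is an additive $O(1)$, it is negligible at the resolution at which the corollary is stated — the gap being exhibited is of order $\log\log\SNR$ — so this yields $\log(\SDR^{*})\le N\log(1+\SNR)$.

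\emph{Lower bound.} I would use that the scheme furnished by Theorem~\ref{thm:main} is channel-agnostic — its encoding maps depend neither on $P$ nor on $\sigma$ — so the achievable distortion $D=c_{1}\SNR^{-N}(\log\SNR)^{10N}+c_{2}\SNR^{-N}+c_{3}$ upper-bounds $D^{*}(1,N,P)$, whence $\SDR^{*}\ge\sigma_U^{2}/D$. The remaining step is a short asymptotic analysis of $D$ as $\SNR\to\infty$: among the three terms the first dominates, since $c_{2}\SNR^{-N}=o\!\left(\SNR^{-N}(\log\SNR)^{10N}\right)$ and, by the hypothesis on $c_{3}$, also $c_{3}=o(\SNR^{-N})=o\!\left(\SNR^{-N}(\log\SNR)^{10N}\right)$; hence $D=c_{1}\SNR^{-N}(\log\SNR)^{10N}\,(1+o(1))$. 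Substituting and taking $\log$,
\begin{align*}
\log(\SDR^{*}) \ge \log\frac{\sigma_U^{2}}{D} = N\log\SNR-10N\log\log\SNR-\log(12c_{1})+o(1),
\end{align*}
and since $\log(12c_{1})=O(1)$ and the error $o(1)$ are both $o(\log\log\SNR)$, the right-hand side equals $N\log\SNR-10N\log\log\SNR+o(\log\log\SNR)$. Combined with the upper bound, this is exactly the claimed two-sided estimate.

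I do not anticipate a genuine obstacle — the mathematical substance lives entirely in Theorem~\ref{thm:main}, and what remains is bookkeeping. The two points that call for (easy) care are: (i) confirming that the scheme of Theorem~\ref{thm:main} is admissible in the infimum defining $D^{*}(1,N,P)$, i.e.\ that it is channel-agnostic, so that it indeed lower-bounds $\SDR^{*}$; and (ii) checking that every additive $O(1)$ produced along the way — $\log c$ on the converse side, $\log(12c_{1})$ on the achievability side — together with the $o(1)$ factor coming from the lower-order terms of $D$, is dominated by $\log\log\SNR$ and hence vanishes into the $o(\log\log\SNR)$ term of the statement.
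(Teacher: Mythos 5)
Your proposal is correct and follows essentially the same route as the paper: the upper bound is read off from the separation-based converse~\eqref{eq:shannon:2} with $M=1$, and the lower bound comes from the achievable distortion of Theorem~\ref{thm:main}, observing that the term $c_1\SNR^{-N}(\log\SNR)^{10N}$ dominates at high $\SNR$ and that all additive constants (including $\log(\pi e/6)$ on the converse side and $\log(12c_1)$ on the achievability side) are absorbed into the $o(\log\log\SNR)$ slack. Your bookkeeping is in fact slightly more careful than the paper's (which writes $\SDR=12/D$ where it should be $1/(12D)$, an $O(1)$ discrepancy that is immaterial here), so nothing further is needed.
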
 
We refer to Section~\ref{sec:analysis} for the proof of Corollary~\ref{cor:SDR}.

\section{An Achievable Scheme for $M=1$}\label{sec:scheme}
\subsection{Preliminaries: A Progressive Expansion}
Let $x \in [0,1]$. For an integer $S$, we define the $S$-progressive expansion of $x$ as 
\begin{align}
x=\sum_{k=1}^{\infty}\sum_{s=1}^{S}  
    \frac{x_{ks}}{(k!)^S (k+1)^{s}},
\end{align}
where $x_{ {ks}} \in \{0, 1, \ldots, k\}$ for every $s\in \{1,\dots, S\}$.

\begin{remark}[{\bf Fixed-Based vs. progressive expansion}]\label{rem:pr-bin}
Recall that in the conventional fixed-base expansion, such as binary expansion, digits are derived by recursively multiplying the residual by the base and computing the quotient. In $S$-progressive expansion, however, iterations are grouped into blocks of $S$ iterations. Within each block, the base remains constant. However, starting from the first block with base 2, the base for each subsequent block increases by one compared to the preceding block.  Note that when $S\to\infty$, then the $S$-progressive expansion reduces to $x=\sum_{s=1}^\infty \frac{x_{1s}}{2^s}$, which is the binary expansion of $x$. 
\end{remark}

\begin{figure}
    \centering
    \includegraphics[width=0.48\textwidth]{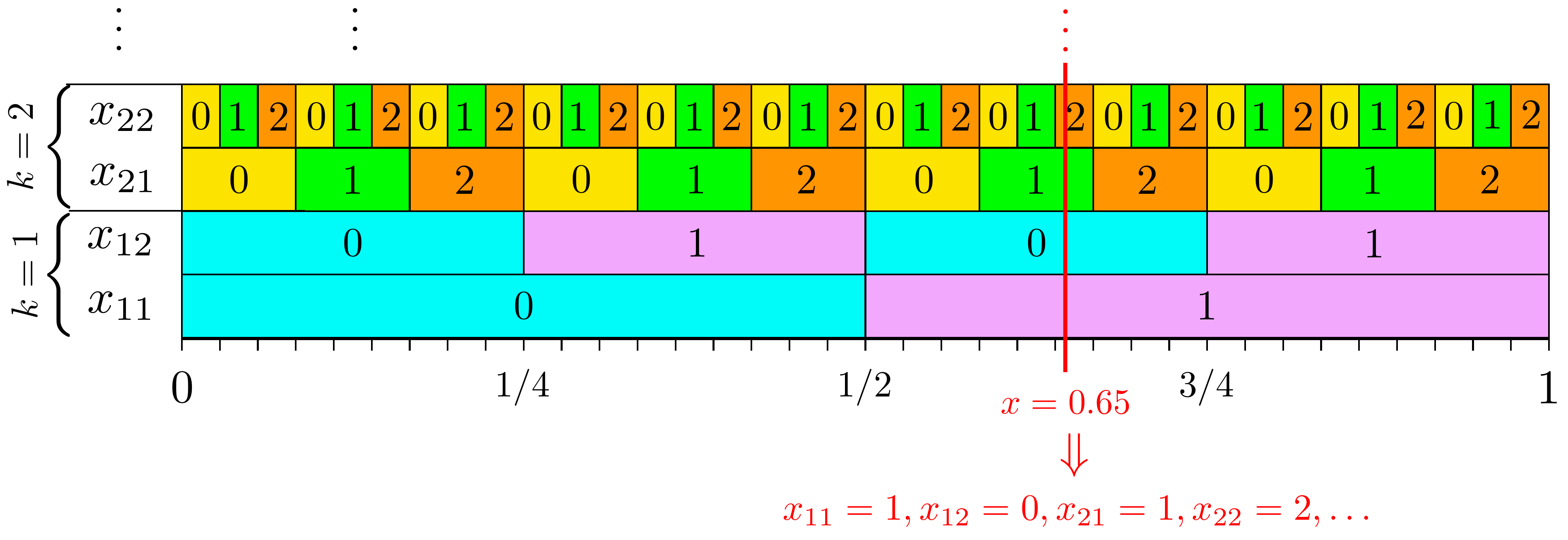}
    \caption{An example of progressive expansion with $S=2$. }
    \label{fig:expansion}
\end{figure}
An example of $2$-progressive expansion ($S=2$) is shown in Figure~\ref{fig:expansion}. As mentioned in Remark~\ref{rem:pr-bin}, the first two symbols $(x_{12},x_{22})$ are obtained similar to a binary expansion. Then, $(x_{21},x_{22})$ in each interval can be found similar to ternary expansion. The following symbols, $x_{31}, x_{32}, x_{41}, \dots, $ can be found in a similar manner.

We will use the following notation throughout the paper. 
\begin{defi}\label{des:order}
For two pairs $(k,s),(\ell,t)\in \mathbb{N}\times \{1,\dots, S\}$ we say $(k,s)\succ (\ell,t)$ if either $k>\ell$, or $k=\ell$ and ${s>t}$. Similarly, $(k,s)\succcurlyeq (\ell,t)$ means wither $(k,s)=(\ell,t)$ or $(k,s)\succ (\ell,t)$. 
\end{defi}

In the following, we provide a list of lemmas that state the properties of the $S$-progressive expansion. These properties will be used later for the proof of the main result. The proofs of the lemmas can be found in Appendix~\ref{expansionProof}.
\begin{lemma}\label{lm:unique}
The $S$-expansion of any $x \in [0,1]$ is unique.\footnote{Recall that for binary expansion we have $(0.1000\ldots)_2= (0.0111\ldots)_2$. A similar boundary scenario also occurs for progressive expansion. To avoid confusion, we stick to the finite (i.e., there exists some $(\ell,t)$ such that  $x_{ks}=0$ for every $(k,s)\succ (\ell,t)$) expansion of $x$ if that exists. }
\end{lemma}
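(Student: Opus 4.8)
The plan is to establish uniqueness by a ``first point of disagreement'' argument, whose crux is a sharp telescoping bound showing that the digits strictly below a given level $(k_0,s_0)$ can, all together, contribute no more than the weight of a single unit at level $(k_0,s_0)$. Existence of at least one expansion is routine (it comes from the greedy digit-extraction procedure described before the lemma), so I will concentrate on uniqueness, which is what the boundary convention in the footnote is there to handle.

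First I would record the elementary identity that makes everything telescope, namely $(k!)^S(k+1)^S=((k+1)!)^S$. Using it together with a finite geometric sum, the maximal mass carried by the digits of a whole block $k$ is
\[
\sum_{s=1}^{S}\frac{k}{(k!)^S(k+1)^s}\;=\;\frac{k}{(k!)^S}\cdot\frac{1}{k+1}\cdot\frac{1-(k+1)^{-S}}{1-(k+1)^{-1}}\;=\;\frac{1}{(k!)^S}-\frac{1}{((k+1)!)^S},
\]
and summing this over all $k>k_0$ telescopes to $\tfrac{1}{((k_0+1)!)^S}$. An analogous geometric computation over the positions $s>s_0$ within block $k_0$ gives $\tfrac{1}{(k_0!)^S(k_0+1)^{s_0}}-\tfrac{1}{((k_0+1)!)^S}$ (with the convention that this in-block tail is empty when $s_0=S$). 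Adding the two contributions yields the key tail identity
\[
\sum_{(k,s)\succ(k_0,s_0)}\frac{k}{(k!)^S(k+1)^s}\;=\;\frac{1}{(k_0!)^S(k_0+1)^{s_0}}.
\]

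With this identity in hand, suppose $x\in[0,1]$ admits two $S$-expansions $\{x_{ks}\}$ and $\{x'_{ks}\}$, both taken in the finite form guaranteed by the footnote, and suppose they differ. Let $(k_0,s_0)$ be the $\succ$-largest index at which they disagree, and assume without loss of generality that $x_{k_0s_0}>x'_{k_0s_0}$, so $x_{k_0s_0}-x'_{k_0s_0}\ge 1$. Subtracting the two expansions and isolating level $(k_0,s_0)$ gives
\[
0\;=\;\frac{x_{k_0s_0}-x'_{k_0s_0}}{(k_0!)^S(k_0+1)^{s_0}}\;+\sum_{(k,s)\succ(k_0,s_0)}\frac{x_{ks}-x'_{ks}}{(k!)^S(k+1)^s}.
\]
The first term is at least $\tfrac{1}{(k_0!)^S(k_0+1)^{s_0}}$; by the tail identity the second term is at least $-\sum_{(k,s)\succ(k_0,s_0)}\tfrac{x'_{ks}}{(k!)^S(k+1)^s}\ge-\tfrac{1}{(k_0!)^S(k_0+1)^{s_0}}$. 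Hence every inequality must be an equality, which forces $x'_{ks}=k$ for all $(k,s)\succ(k_0,s_0)$ --- a non-terminating expansion, contradicting the convention that we chose the finite representation for $x'$. Therefore the two expansions coincide, proving uniqueness.

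The main obstacle I expect is purely bookkeeping: collapsing the two nested geometric sums (over $s$ inside a block and over blocks $k$) correctly via $(k!)^S(k+1)^S=((k+1)!)^S$, and handling the boundary index $s_0=S$ (empty in-block tail) without an off-by-one slip. Once the tail identity is written down cleanly, the first-disagreement argument is immediate.
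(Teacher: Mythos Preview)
Your approach is essentially the paper's: locate the first position of disagreement, bound the tail mass by the telescoping identity you state, and conclude that equality forces one expansion to be finite and the other to have $x'_{ks}=k$ for all later indices---which the footnote's convention then rules out. The paper carries out the same two geometric/telescoping sums inline (its inequalities~(\ref{eq:pr:unq:2}) and~(\ref{eq:pr:unq:3}) are exactly your in-block and across-block pieces), so there is no substantive difference.

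Two slips to clean up. First, you write ``the $\succ$-largest index at which they disagree,'' but the displayed equation and everything after it require the $\succ$-\emph{smallest} such index (the first disagreement); otherwise the terms with $(k,s)\prec(k_0,s_0)$ do not cancel. Second, the phrase ``both taken in the finite form guaranteed by the footnote'' overstates what the footnote says: it does not guarantee a finite expansion exists, it only selects the finite one when it does. You do not need finiteness as a hypothesis---your argument already shows that \emph{any} two distinct expansions must be the finite/non-terminating pair, and the convention then picks the finite one. Phrase the contradiction that way and the proof covers all $x\in[0,1]$.
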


\begin{lemma}\label{lm:sum-fact}
    For any $\ell\in \N$ we have
    \[
\sum_{k=\ell}^{\infty} \frac{k}{(k+1)!}=\frac{1}{\ell!}.
    \]
\end{lemma}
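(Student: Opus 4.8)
\textbf{Proof plan for Lemma~\ref{lm:sum-fact}.}

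The identity $\sum_{k=\ell}^{\infty} \frac{k}{(k+1)!} = \frac{1}{\ell!}$ is a telescoping sum in disguise, so the plan is to rewrite the general term as a difference of consecutive reciprocal factorials. The key algebraic observation is that $\frac{k}{(k+1)!} = \frac{(k+1)-1}{(k+1)!} = \frac{1}{k!} - \frac{1}{(k+1)!}$. With this, the partial sum $\sum_{k=\ell}^{M} \frac{k}{(k+1)!}$ collapses to $\frac{1}{\ell!} - \frac{1}{(M+1)!}$.

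The remaining step is to take $M\to\infty$ and observe that $\frac{1}{(M+1)!}\to 0$, which is immediate since factorials grow without bound (and the series converges absolutely, so rearrangement/limit interchange is unproblematic). This yields $\sum_{k=\ell}^{\infty} \frac{k}{(k+1)!} = \frac{1}{\ell!}$, as claimed.

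There is essentially no obstacle here; the only ``trick'' is spotting the telescoping decomposition $\frac{k}{(k+1)!} = \frac{1}{k!} - \frac{1}{(k+1)!}$, after which everything is routine. I would present it in three short lines: state the decomposition, write the collapsed partial sum, and pass to the limit. An alternative (slightly slicker but equivalent) framing is to note that $\sum_{k=\ell}^\infty \frac{k}{(k+1)!}$ is exactly the tail telescoping of the standard identity, or to verify the base case $\ell$ and induct downward using $\frac{\ell}{(\ell+1)!} + \frac{1}{(\ell+1)!} = \frac{\ell+1}{(\ell+1)!} = \frac{1}{\ell!}$; I would go with the direct telescoping argument as the cleanest.
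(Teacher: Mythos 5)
Your proposal is correct and uses essentially the same idea as the paper: both rest on the decomposition $\frac{k}{(k+1)!}=\frac{k+1}{(k+1)!}-\frac{1}{(k+1)!}=\frac{1}{k!}-\frac{1}{(k+1)!}$ followed by telescoping. The only cosmetic difference is that you telescope a finite partial sum and pass to the limit, whereas the paper splits the infinite series directly and reindexes; both are valid since the series converge absolutely.
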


\begin{lemma}\label{lm:exp-small}
    If $x<\frac{1}{(\ell !)^S(\ell+1)^t}$ then for the $S$-progressive expansion we have 
$x_{ks}=0$ for every $(k,s)\preccurlyeq (\ell,t)$.
\end{lemma}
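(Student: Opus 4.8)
The plan is to argue by contradiction using only the defining series of the $S$-progressive expansion together with the fact that all of its terms are non-negative; uniqueness (Lemma~\ref{lm:unique}) is not needed. Suppose $x<\frac{1}{(\ell!)^S(\ell+1)^t}$ yet $x_{k^\star s^\star}\neq 0$ for some $(k^\star,s^\star)\preccurlyeq(\ell,t)$. Since $x_{k^\star s^\star}\in\{1,\dots,k^\star\}$ and every other term of the expansion is $\ge 0$, keeping only the $(k^\star,s^\star)$-th term gives
\begin{align*}
x \;\ge\; \frac{x_{k^\star s^\star}}{(k^\star!)^S(k^\star+1)^{s^\star}} \;\ge\; \frac{1}{(k^\star!)^S(k^\star+1)^{s^\star}}.
\end{align*}
It then suffices to show that $(k^\star,s^\star)\preccurlyeq(\ell,t)$ forces $(k^\star!)^S(k^\star+1)^{s^\star}\le(\ell!)^S(\ell+1)^t$, i.e.\ the $(k^\star,s^\star)$-th digit carries at least as much weight as the $(\ell,t)$-th one; plugging this into the display contradicts the hypothesis.

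The heart of the argument is therefore this monotonicity inequality, which I would verify via the two cases in Definition~\ref{des:order}. If $k^\star=\ell$ and $s^\star\le t$, then $(k^\star!)^S(k^\star+1)^{s^\star}=(\ell!)^S(\ell+1)^{s^\star}\le(\ell!)^S(\ell+1)^{t}$ because $\ell+1\ge 2$. If $k^\star<\ell$, then using $s^\star\le S$ followed by $(k^\star+1)!\le\ell!$,
\begin{align*}
(k^\star!)^S(k^\star+1)^{s^\star}\le (k^\star!)^S(k^\star+1)^{S}=\big((k^\star+1)!\big)^S\le(\ell!)^S\le(\ell!)^S(\ell+1)^t.
\end{align*}
This closes the contradiction, so no nonzero digit can sit at a position $\preccurlyeq(\ell,t)$, which is exactly the claim.

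I do not anticipate a genuine obstacle: the only subtle point is the step $(k^\star+1)^{s^\star}\le(k^\star+1)^{S}$, which relies on $s^\star\le S$, i.e.\ on the fact that each block of the expansion has exactly $S$ digits in a fixed base $k+1$. This is precisely the structural feature that makes a full base-$(k+1)$ block fit within one quantum $\tfrac{1}{(k!)^S}$ of the factorial scale, so that the digit weights decrease consistently with the order $\succ$ — which is all the proof uses.
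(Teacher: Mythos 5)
Your proof is correct and follows essentially the same route as the paper's: isolate the single nonzero digit, lower-bound $x$ by its weight using non-negativity of all other terms, and derive a contradiction with the hypothesis. The only difference is cosmetic — the paper compresses the weight-monotonicity step ($(k_0!)^S(k_0+1)^{s_0}>(\ell!)^S(\ell+1)^t$ "or equivalently" $(k_0,s_0)\succcurlyeq(\ell,t)$) into one phrase, whereas you verify it explicitly via the two cases of Definition~\ref{des:order}, which is a welcome bit of added rigor.
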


\begin{lemma}
\label{lm:exp-prop}
Let $X$ be a variable chosen uniformly at random from $[0,1]$ with $S$-progressive expansion 
\begin{align}
X=\sum_{k=1}^{\infty}\sum_{s=1}^{S}   \frac{X_{ks}}{(k!)^S (k+1)^s},
\end{align}
then 
\begin{itemize}
\item $X_{ks}$ has a uniform distribution in $\{0, 1, \ldots, k\} $.

\item The sequence of random variables $\{ \{ X_{ks} \}_{s=1}^{S} \}_{k=1}^{\infty}$ are pairwise independent.
\end{itemize}
\end{lemma}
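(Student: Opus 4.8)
The plan is to recognize the $S$-progressive expansion as a mixed-radix expansion and to reduce both claims to a single computation: the probability that a finite ``prefix'' of digits takes a prescribed value.

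First I would fix $(\ell,t)\in\N\times\{1,\dots,S\}$ together with arbitrary values $a_{ks}\in\{0,1,\dots,k\}$ for all $(k,s)\preccurlyeq(\ell,t)$, set $v\defeq\sum_{(k,s)\preccurlyeq(\ell,t)}\frac{a_{ks}}{(k!)^S(k+1)^s}$, and prove that the ``cylinder set''
\[
\big\{x\in[0,1): x_{ks}=a_{ks}\ \text{for all}\ (k,s)\preccurlyeq(\ell,t)\big\}
\]
agrees, outside a countable set, with the interval $\big[v,\;v+(\ell!)^{-S}(\ell+1)^{-t}\big)$. For the all-zeros prefix ($v=0$) one inclusion is exactly Lemma~\ref{lm:exp-small}; the converse follows by bounding the tail $\sum_{(k,s)\succ(\ell,t)}\frac{x_{ks}}{(k!)^S(k+1)^s}$, using the finite geometric sum $\sum_{s=1}^{S}(k+1)^{-s}=(1-(k+1)^{-S})/k$ within each block and then telescoping $\frac{1}{(k!)^S}-\frac{1}{((k+1)!)^S}$ across blocks (Lemma~\ref{lm:sum-fact} being the $S=1$ case). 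For a general prefix, the key observation is that adding $v$ to a number with zero prefix creates no carry into positions $\preccurlyeq(\ell,t)$, because such a number is smaller than the place value $(\ell!)^{-S}(\ell+1)^{-t}$ of position $(\ell,t)$; hence $x\mapsto x+v$ is a measure-preserving bijection from the zero-prefix cylinder onto the $(a_{ks})$-prefix cylinder.

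Next, since $X\sim\mathsf{Unif}([0,1])$ and the non-uniqueness set of Lemma~\ref{lm:unique} is countable, hence Lebesgue-null, the previous step gives
\[
\Pr\big[X_{ks}=a_{ks}\ \text{for all}\ (k,s)\preccurlyeq(\ell,t)\big]=\frac{1}{(\ell!)^S(\ell+1)^t}=\prod_{(k,s)\preccurlyeq(\ell,t)}\frac{1}{k+1},
\]
the last equality being $\prod_{k=1}^{\ell-1}(k+1)^S\cdot(\ell+1)^t=(\ell!)^S(\ell+1)^t$. Summing out all digits but one gives $\Pr[X_{ks}=a]=\frac{1}{k+1}$ for each $a\in\{0,\dots,k\}$, i.e.\ uniformity; and since the joint law of every finite prefix factors into these uniform marginals, the whole family $\{X_{ks}\}$ is mutually independent. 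In particular the blocks $\{X_{ks}\}_{s=1}^{S}$, $k\ge 1$, are pairwise (indeed mutually) independent, which is the second claim.

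I expect the only real obstacle to be the cylinder-set identification: showing a prescribed prefix confines $x$ to an interval of \emph{exactly} length $(\ell!)^{-S}(\ell+1)^{-t}$, which needs both the upper bound of Lemma~\ref{lm:exp-small} and its telescoping converse, together with the no-carry remark making translation by $v$ exact. Given that, uniformity and independence are routine bookkeeping.
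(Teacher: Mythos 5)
Your proof is correct, but it takes a genuinely different route from the paper's. The paper argues digit by digit: it writes $\{X_{ks}=n\}$ as a disjoint union of $(k!)^S(k+1)^{s-1}$ intervals $\cA(i;k,s,n)$, each of length $(k!)^{-S}(k+1)^{-s}$, which gives uniformity immediately; for independence it then explicitly computes $|\cA(i;k,s,n)\cap\cA(j;\ell,t,m)|$ and counts, for each coarse interval, exactly how many of the $r/(k+1)$ fine intervals with digit $n$ it contains, arriving at $\P(X_{ks}=n,X_{\ell t}=m)=\frac{1}{(k+1)(\ell+1)}$. You instead compute the probability of an entire prefix cylinder, identify it (up to the null set of doubly-representable points) with a single interval of length $(\ell!)^{-S}(\ell+1)^{-t}=\prod_{(k,s)\preccurlyeq(\ell,t)}\frac{1}{k+1}$, and obtain all marginals and joints by marginalizing this product form. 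Your approach proves the stronger statement of mutual independence of the whole digit family (the paper only claims and proves pairwise independence, which is all it later needs), and it avoids the interval-intersection bookkeeping; the price is that the entire burden falls on the cylinder-set identification, in particular the translation-by-$v$ step. That step does work --- since a zero-prefix number $y$ satisfies $y<(\ell!)^{-S}(\ell+1)^{-t}$ by Lemma~\ref{lm:exp-small}, the digit string ``prefix $(a_{ks})$ followed by the tail digits of $y$'' is a valid in-range representation of $v+y$, and Lemma~\ref{lm:unique} then forces it to be \emph{the} expansion except on the countable set where it coincides with the infinite variant of a finite expansion --- but you should make that appeal to uniqueness explicit, since ``no carry'' alone does not by itself identify which of several candidate representations the greedy expansion algorithm actually produces. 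With that said, both proofs are sound; yours is the more structural one and the paper's the more computational one.
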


\begin{lemma}\label{lm:fact-inverse-bound}
    If $\omega\geq 4$ and $n\geq 6$ satisfy $ \omega \leq n!$, then we have $n-1\geq \frac{\log \omega}{\log \log \omega}$. 
\end{lemma}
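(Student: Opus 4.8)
The plan is to start from the hypothesis $\omega \le n!$ and derive a lower bound on $n$ by comparing $n!$ against a suitable power of $n$. The crude bound $n! \le n^{n}$ would give $\log\omega \le n\log n$, hence $n \ge \log\omega/\log n$; but the denominator here is $\log n$, not $\log\log\omega$, so this is not yet in the desired form. The key is to iterate: since $n \ge \log\omega/\log n$ and (for the range of interest) $n$ is not too large compared to $\log\omega$, I can substitute a bound on $\log n$ in terms of $\log\log\omega$ and clean up the constants, losing only the additive $1$ recorded in the ``$n-1$'' on the left-hand side.

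More concretely, the steps I would carry out are: (i) From $\omega \le n!$ and $n! \le n^{n-1}$ for $n \ge 3$ (or a similar clean inequality valid for $n\ge 6$), conclude $\log\omega \le (n-1)\log n$, so that $n-1 \ge \log\omega/\log n$. (ii) Bound $\log n$ from above by $\log\log\omega$ up to lower-order terms: since $\omega \le n!$ we also have $\log\omega \le n\log n \le n^2$, i.e. $n \ge \sqrt{\log\omega}$, and combined with an upper bound on $n$ coming from $\omega \ge 4$ and monotonicity one shows $\log n \le \log\log\omega$ in the stated range — this is where the numeric thresholds $\omega \ge 4$ and $n \ge 6$ are used to absorb constants. (iii) Chain the two inequalities to get $n - 1 \ge \log\omega/\log\log\omega$.

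The main obstacle I anticipate is step (ii): getting $\log n \le \log\log\omega$ rather than merely $\log n \le C\log\log\omega$ requires care, because a priori $n$ could be as small as roughly $\log\omega/\log\log\omega$, which would make $\log n \approx \log\log\omega - \log\log\log\omega < \log\log\omega$ — so the inequality does hold, but proving it cleanly means handling the doubly-logarithmic correction and checking the small cases $n = 6, 7, \dots$ by hand (or via a short monotonicity argument) to make sure the thresholds $\omega\ge 4$, $n\ge 6$ are large enough for the asymptotic comparison to kick in. An alternative, possibly cleaner route is to prove the contrapositive: assume $n - 1 < \log\omega/\log\log\omega$ and derive $\omega > n!$, by showing $\log(n!) \ge (n-1)\log((n-1)/e) $ (Stirling-type bound) is incompatible with $n-1 < \log\omega/\log\log\omega$; I would try both and keep whichever gives the shortest constant-chasing.
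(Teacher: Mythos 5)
Your step (ii) contains a genuine gap: the inequality $\log n \leq \log\log\omega$ (equivalently $n \leq \log\omega$) does \emph{not} follow from the hypotheses $\omega \geq 4$, $n \geq 6$, $\omega \leq n!$. These hypotheses place no upper bound on $n$ in terms of $\omega$ --- take $\omega = 4$ and $n = 6$ (or $n = 100$): then $\log\log\omega = 1$ while $\log n = \log 6 \approx 2.58$, so your intermediate claim fails even though the lemma's conclusion ($n-1 = 5 \geq 2 = \log\omega/\log\log\omega$) is of course still true. Your stated worry about $n$ being as small as roughly $\log\omega/\log\log\omega$ is aimed at the wrong end; the case your argument cannot handle is $\omega$ much smaller than $n!$, and there is no ``upper bound on $n$ coming from $\omega \geq 4$'' to invoke.

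The missing idea --- and the one the paper uses --- is to first reduce to the extremal case $\omega = n!$: the function $f(\omega) = \frac{\log\omega}{\log\log\omega}$ is increasing for $\omega \geq 4$ (its derivative is $\frac{(\log\log\omega-1)\log e}{\omega(\log\log\omega)^2} \geq 0$ there), so $f(\omega) \leq f(n!)$ and it suffices to prove $f(n!) \leq n-1$. After this reduction your chain does close: $\log(n!) \leq (n-1)\log n$ from $n! \leq n^{n-1}$, and $\log n \leq \log\log(n!)$ because $n! \geq 2^n$ for $n \geq 4$, whence $n-1 \geq \log(n!)/\log n \geq \log(n!)/\log\log(n!) \geq f(\omega)$. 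This is essentially the paper's proof (the paper derives the slightly sharper bounds $\log n! \leq (n-1)\log(n-1)$ and $\log\log n! \geq \log(n-1)$ via Stirling-type integral estimates, but your more elementary factorial bounds would suffice). Without the monotonicity reduction, however, the argument as you wrote it does not go through.
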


\subsection{An Achievable Scheme}

In this section, we focus on $M=1$ source symbol and $N$ channel uses. We also assume that $U\sim \mathsf{Unif}([-\frac{1}{2},\frac{1}{2}])$. Lastly, without loss of generality, we assume unit power constraint, i.e., $\E[|\x{n}|^2]\leq P=1$.  \\[-1mm]

    \noindent {$\bullet$ \bf Encoding:} We start with the $N$-progressive expansion of ${U+\frac{1}{2}}$, that is  
    \begin{align}\label{eq:source-exp}
    U+\frac{1}{2}&= \sum_{k=1}^{\infty}\sum_{n=1}^{N}
    \frac{U_{kn}}{(k!)^N (k+1)^n},
\end{align}
where $U_{kn} \in \{0, 1, \ldots,  k\} $.  Lemma~\ref{lm:unique} implies that this expansion is unique. In addition, from Lemma~\ref{lm:exp-prop} we know that  $U_{kn}$ is uniformly distributed over  $\{0, 1, \ldots,  k\}$ and $U_{ks}$, $k=1,2,\ldots$, and $s=1, \ldots, S$ are pairwise independent.

For each channel use $n\in[N]$, we generate 
\begin{align}\label{eq:channel-input}
    \tx{n}= 3!\sum_{k=1}^{\infty}  
    \frac{U_{kn}+1}{(k+3)!} - \frac{1}{2}
\end{align}

The following lemma provides the properties of $\tx{n}$. We refer to the Appendix for the proof of the lemma. 
\begin{lemma}\label{lm:input-properties}
    For $\tx{n}$ defined in~\eqref{eq:channel-input}, we have 
    ${\tx{n} \in [-\alpha, \alpha]}$ where $\alpha = \frac{\sqrt{3}}{(16.5-6e)}$. Moreover, we have $\E[\tx{n}]=0$ and $\E[|\tx{n}|^2] =  0.0173814$.
\end{lemma}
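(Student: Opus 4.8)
The plan is to analyze the three claims — the range $\tx{n}\in[-\alpha,\alpha]$, the zero mean $\E[\tx{n}]=0$, and the second moment $\E[|\tx{n}|^2]=0.0173814$ — by exploiting the structure of the series defining $\tx{n}$ in~\eqref{eq:channel-input} together with the distributional facts from Lemma~\ref{lm:exp-prop}.

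For the range claim, I would write $\tx{n}+\tfrac12 = 6\sum_{k\ge 1}\frac{U_{kn}+1}{(k+3)!}$. Since $U_{kn}\in\{0,1,\dots,k\}$, the sum is bounded below by plugging in $U_{kn}=0$ for all $k$ and above by plugging in $U_{kn}=k$ for all $k$. The lower bound gives $6\sum_{k\ge1}\frac{1}{(k+3)!} = 6(e - 1 - 1 - \tfrac12 - \tfrac16) = 6e - 16$ (using $\sum_{k\ge0}\frac1{k!}=e$ and subtracting the $k=0,1,2,3$ terms, noting the sum starts at $k=1$ so the summand is $1/(k+3)!$ starting from $1/4!$). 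The upper bound uses $\sum_{k\ge1}\frac{k}{(k+3)!}$, which I would evaluate in closed form by writing $k = (k+3) - 3$, giving $\sum_{k\ge1}\frac1{(k+2)!} - 3\sum_{k\ge1}\frac1{(k+3)!}$, each a tail of the exponential series; this yields a clean rational-plus-$e$ expression. Centering by $-\tfrac12$ and symmetrizing, the half-width $\alpha$ comes out to $\frac{\sqrt3}{16.5-6e}$ after matching constants — I expect the $\sqrt3$ to be a normalization chosen so the bound is stated cleanly rather than something that drops out of the raw sum, so I would double-check the scaling convention, but the mechanical part is just these two exponential-tail evaluations.

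For the moments, the key tool is that by Lemma~\ref{lm:exp-prop} each $U_{kn}$ is uniform on $\{0,1,\dots,k\}$, so $\E[U_{kn}] = k/2$ and $\E[U_{kn}^2] = k(2k+1)/6$ (or equivalently $\var(U_{kn}) = \frac{(k+1)^2-1}{12} = \frac{k(k+2)}{12}$), and the family $\{U_{kn}\}_k$ is pairwise independent for fixed $n$. Pairwise independence is exactly enough to compute a second moment of a sum term-by-term in the cross terms. For $\E[\tx{n}]$: $\E[\tx{n}] = 6\sum_{k\ge1}\frac{\E[U_{kn}]+1}{(k+3)!} - \tfrac12 = 6\sum_{k\ge1}\frac{k/2+1}{(k+3)!} - \tfrac12$, and I would show this telescoping/exponential-tail sum equals exactly $\tfrac12$, giving mean zero — this is consistent with the range being symmetric about $0$ and is a good internal check. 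For $\E[|\tx{n}|^2]$: writing $\tx{n} = \sum_{k\ge1} a_k (U_{kn} - \E[U_{kn}])$ with $a_k = 6/(k+3)!$ (the centered form, since the mean is zero), pairwise independence gives $\E[|\tx{n}|^2] = \sum_{k\ge1} a_k^2 \var(U_{kn}) = 36\sum_{k\ge1}\frac{1}{((k+3)!)^2}\cdot\frac{k(k+2)}{12} = 3\sum_{k\ge1}\frac{k(k+2)}{((k+3)!)^2}$. This last series converges extremely fast (the $k=1$ term dominates: $3\cdot\frac{3}{(24)^2} = \frac{9}{576} \approx 0.015625$, and subsequent terms add the remaining $\approx 0.0017564$), so the numerical value $0.0173814$ is obtained by summing the first handful of terms and bounding the tail by a geometric-type estimate.

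The main obstacle I anticipate is purely bookkeeping rather than conceptual: getting the exact closed forms for the exponential-tail sums right (keeping careful track of which $k=0,1,2,3$ terms are subtracted, and the factor of $6=3!$), and confirming that the stated constant $\alpha=\frac{\sqrt3}{16.5-6e}$ matches the raw half-width $3(6e-16)$-type quantity under whatever normalization is intended — the appearance of $\sqrt3$ strongly suggests the lemma is stated after an implicit rescaling (perhaps to later normalize power), so I would verify $\alpha$ against $\tfrac12\cdot\big(\text{upper} - \text{lower}\big)$ of the centered series and reconcile any discrepancy. Pairwise independence (not full independence) sufficing for the variance computation is the one place where the precise hypothesis of Lemma~\ref{lm:exp-prop} matters, and I would flag that explicitly. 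Everything else reduces to evaluating two or three rapidly convergent series.
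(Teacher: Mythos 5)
Your proposal is correct and follows essentially the same route as the paper: the range is obtained by substituting the extremal digit values $U_{kn}=0$ and $U_{kn}=k$ and evaluating the resulting exponential-tail series, and the moments follow from the uniformity and pairwise independence of the $U_{kn}$ (Lemma~\ref{lm:exp-prop}), your centered variance-sum $36\sum_{k}\var(U_{kn})/((k+3)!)^2$ being a streamlined version of the paper's expansion of the uncentered square into diagonal and cross terms. Your suspicion about the constant is also well founded: the paper's own proof derives the half-width $16.5-6e\approx 0.1903$, so the stated $\alpha=\frac{\sqrt{3}}{16.5-6e}\approx 9.1$ appears to be a typo (a valid but vacuously loose bound).
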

Then, the channel input for the $n$th channel is given by 
\begin{align*}
 \x{n}= \gamma\tx{n},
\end{align*}
with $\gamma :=7.585= \frac{1}{\sqrt{\E[|\tx{n}|^2]}}$, so that the power constraint is satisfied, i.e., $\E[X^2(n)]=1$  for every $n\in [N]$.

\begin{remark} [{\bf The Intuition Behind the Achievable Scheme}]
    Note that from each block in~\eqref{eq:source-exp} i.e., $\{U_{k1}, U_{k2}, \dots, U_{kN}\}$, one digit is assigned to each channel, and the transmitted symbol is calculated by reverse-progressive expansion with a block length of one ($S=1$), as in~\eqref{eq:channel-input}. During this process, the first and last alphabet in each layer of expansion are left unused. More precisely, in the progressive expansion, the coefficient of $1/(k+3)!$ can be any number in $\{0,1,\dots, k+2\}$. However, $U_{kn1}+1$ can only take values in $\{1,\dots, k+1\}$. That is, $0$ and $k+2$ are eliminated to avoid negative and positive noise (respectively) at level $1/(k+4)!$ to reach level $1/(k+2)!$. This precaution is taken to shield the signal from (positive or negative) carry-overs of added noise, ensuring that the corruption does not propagate to all other digits. The proposed scheme achieves a balanced trade-off between the length of the first block in the expansion affected by additive noise and the number of unused alphabets up to that block, \emph{no matter at which level in the expansion, noise hits the signal}. 
As a result, the proposed scheme remains very close to the optimum for all values of $\SNR$.
\end{remark}

\begin{figure}
    \centering
    \includegraphics[width=0.35\textwidth]{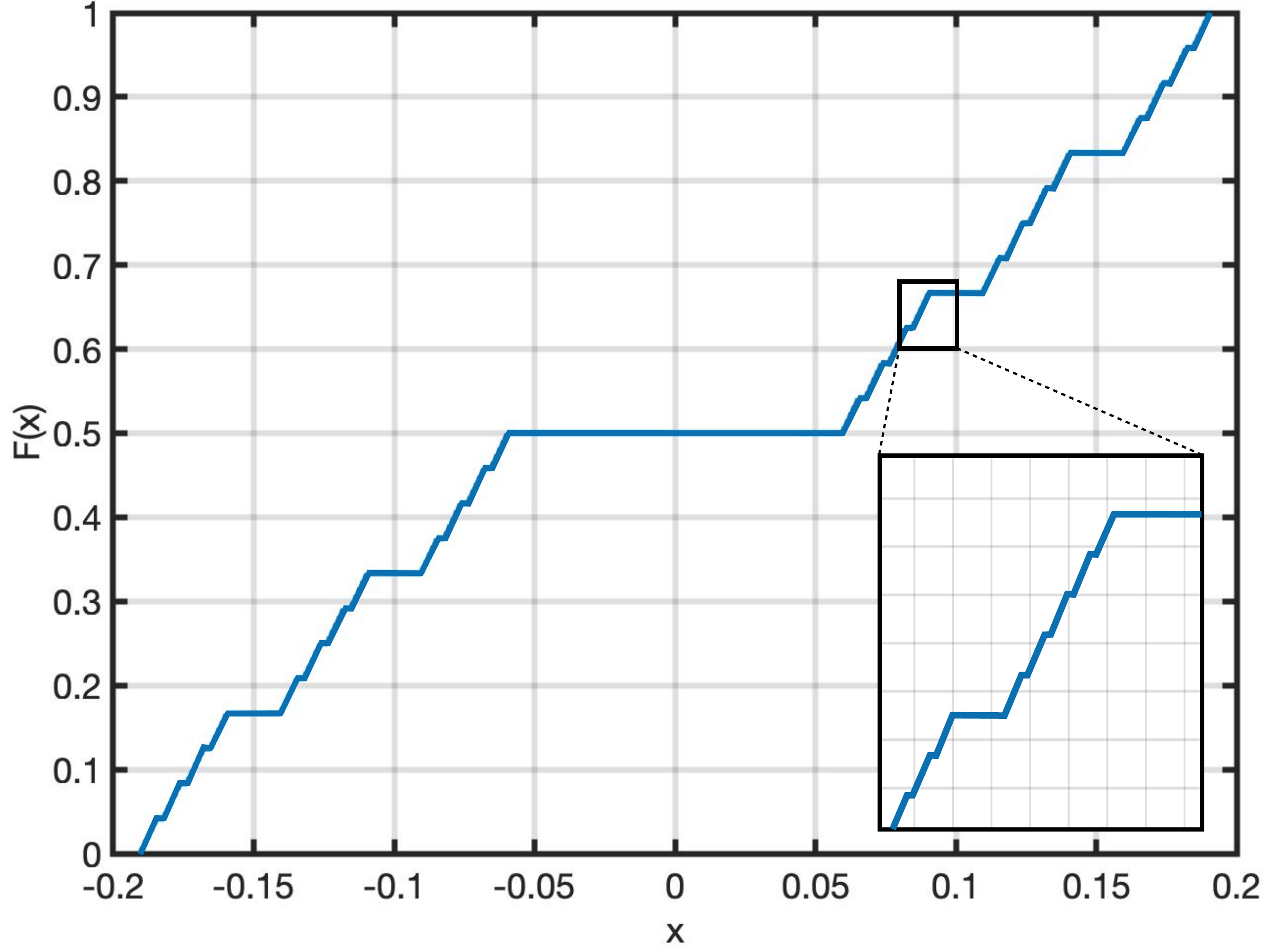}
    \vspace{-5pt}
    \caption{The cumulative distribution function of $\tx{n}$ generated from uniformly distributed $U$.}
    \label{fig:enter-label}
    \vspace{-5pt}
\end{figure}

Figure~\ref{fig:enter-label} shows the cumulative distribution function (CDF) of $\tx{n}$ generated from uniformly distributed $U$, according to~\eqref{eq:channel-input}. It can be seen that $\tx{n}$ does \emph{not} admit a uniform distribution. Moreover, there is a semi-fractal behavior in the CDF. For instance, for no value of $U$, $\tx{n}$ admits a value in $[-0.0597,0.0597]$. Similarly, there are non-occurring intervals with smaller lengths on the negative and positive sides.

\begin{figure}
    \centering
\includegraphics[width=.488\textwidth]{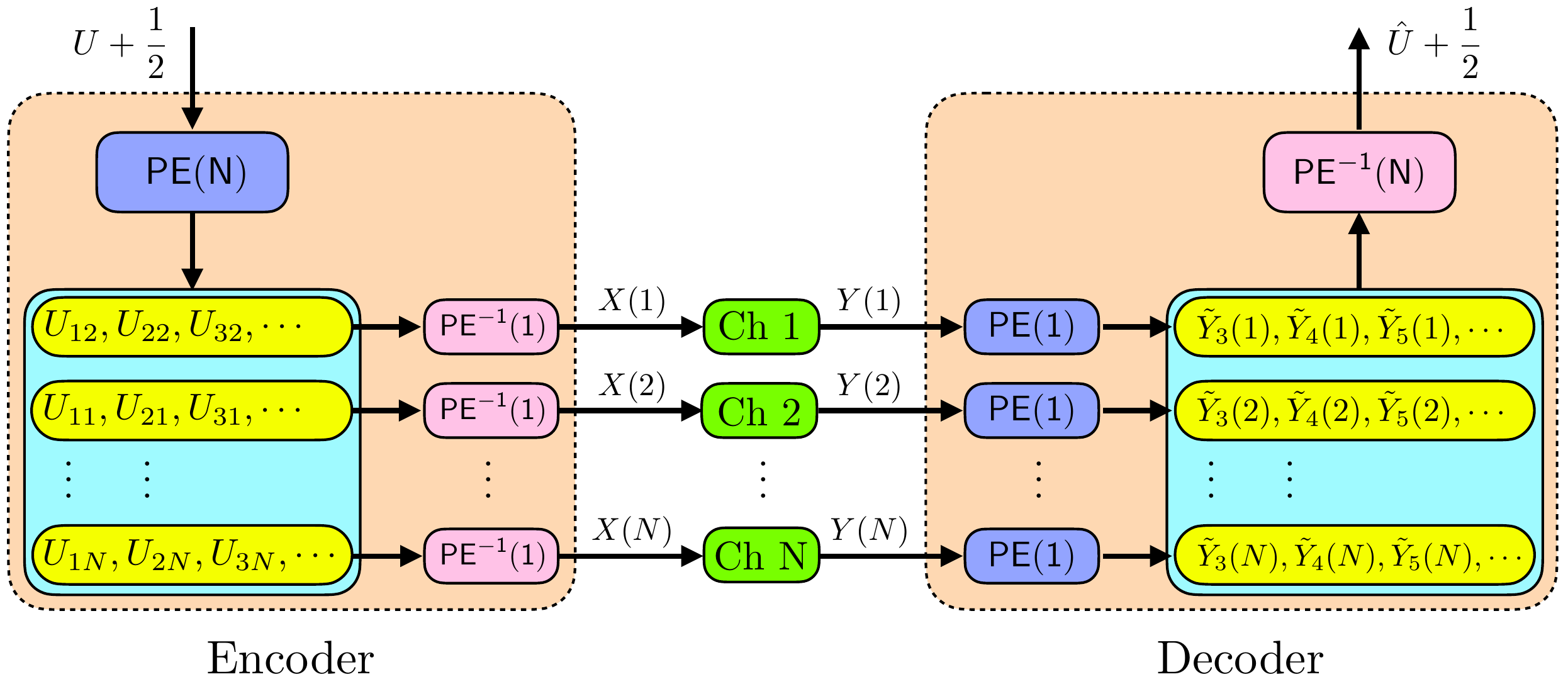}
    \caption{The block diagram of the encoder and decoder. The $\mathsf{PE}(S)$ blocks represent the $S$-progressive expansion, while $\mathsf{PE}^{-1}(S)$ shows its inverse (progressive expansion to decimal conversion). }
    \label{fig:Diagram}
    \vspace{-5pt}
\end{figure}

\noindent $\bullet$ {\bf Decoding: }
Let
\begin{align}
    \ty{n}=\min \{ \max\{-\frac{1}{2}, \frac{1}{\gamma}\y{n}\}, \frac{1}{2}\}.
\end{align}
Then, $0\leq \frac{1}{3!}(\frac{1}{\gamma}\ty{n}+\frac{1}{2}) \leq \frac{1}{3!}$. Therefore, Lemma~\ref{lm:exp-small} implies that it can be written as 
\begin{align}
    \frac{1}{3!}\left(\ty{n}+\frac{1}{2}\right)= 
    \sum_{k=3}^{\infty}  
    \frac{\ty[k]{n}}{(k+1)!}=
    \sum_{k=1}^{\infty}  
    \frac{\tilde{V}_{k}(n)}{(k+3)!}.
\end{align}
where $\tilde{V}_{k}(n):=\ty[k+2]{n}\in \{0,1, \ldots, k\!+\!2\}$. 
For every ${k\in \mathbb{N}}$ and every channel $n\in [N]$, we define ${\hat{U}_{kn}:=\tilde{V}_{k}(n)\!-\!1}$. Then, combining all the channel outputs, we compute and declare
\begin{align}
    \hat{U}= \sum_{k=1}^{\infty}\sum_{n=1}^{N}  
    \frac{\hu[k]{n}}{(k!)^N (k+1)^n}-\frac{1}{2},
\end{align}
as the estimate of $U$. 

Figure~\ref{fig:Diagram} illustrates the block diagram of the proposed encoder and decoder.

\section{Performance Analysis}
\label{sec:analysis}
Without loss of generality, let us assume that $P=1$. In addition, here we focus on the cases where $\sigma \leq \frac{6\gamma}{4}  \approx 11.37$.
For a given $\sigma$, consider $\ell \in \mathbb{N}$ that satisfies 
\begin{align}\label{eq:ell-range}
    (\ell+3)! \leq \frac{6\gamma}{\sigma} < (\ell+4)!.
\end{align}
Based on the regions given in~\eqref{eq:ell-range}, we can introduce two regimes for $Z_n$. 
\begin{prop}\label{prop:bad}
\begin{align*}
    \P\left(\frac{|Z_n|}{6\gamma} \geq \frac{1}{(\ell+1)!}\right)  \leq \exp\left(-\frac{(\log (6\gamma/\sigma))^2}{2(\log \log (6\gamma/\sigma))^2}\right).
\end{align*} 
\end{prop}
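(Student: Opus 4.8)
\textbf{Proof proposal for Proposition~\ref{prop:bad}.}
The plan is to control the tail of a single Gaussian noise sample $Z_n\sim\mathcal{N}(0,\sigma^2)$ using the standard bound $\P(|Z_n|\geq t)\leq \exp(-t^2/(2\sigma^2))$, and then substitute the threshold $t=6\gamma/(\ell+1)!$ implied by the event. The key is to translate the factorial threshold into a clean exponent by invoking the defining relation~\eqref{eq:ell-range}. First I would write the event as $|Z_n|\geq 6\gamma/(\ell+1)!$ and apply the Gaussian tail bound to get an upper bound $\exp\bigl(-\tfrac{1}{2}(6\gamma)^2/(\sigma^2((\ell+1)!)^2)\bigr)$. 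Then the task reduces to lower-bounding the exponent $\tfrac{1}{2}(6\gamma/\sigma)^2/((\ell+1)!)^2$ by $(\log(6\gamma/\sigma))^2/(2(\log\log(6\gamma/\sigma))^2)$.

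For that reduction, set $\omega := 6\gamma/\sigma$. From~\eqref{eq:ell-range} we have $(\ell+3)!\leq \omega$, equivalently $\omega\leq (\ell+3)!$ fails to exceed... more precisely $(\ell+1)!\cdot(\ell+2)(\ell+3)\leq\omega$, so $(\ell+1)!\leq \omega/((\ell+2)(\ell+3))\leq\omega$. Hence $\bigl(6\gamma/(\sigma(\ell+1)!)\bigr)^2\geq (\omega/\omega)^2$ is too weak; instead I would use $(\ell+1)!\leq\omega$ to get $\omega/(\ell+1)!\geq 1$, which alone is not enough. The right move is to apply Lemma~\ref{lm:fact-inverse-bound}: with $n=\ell+3$ and the same $\omega$ (checking $\omega\geq 4$, which holds since $\sigma\leq 6\gamma/4$, and $n=\ell+3\geq 6$, i.e.\ $\ell\geq 3$, handling small $\ell$ separately or absorbing into constants), the lemma gives $\ell+2\geq \log\omega/\log\log\omega$. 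Combining $(\ell+1)!\leq \omega$ with this bound on $\ell+2$, I would show $\tfrac{1}{2}\bigl(\omega/(\ell+1)!\bigr)^2 \geq \tfrac{1}{2}(\ell+2)^2 \cdot (\text{something}\geq 1)$ — the cleanest route is $\omega/(\ell+1)! \geq (\ell+2)(\ell+3)/1 \geq (\ell+2) \geq \log\omega/\log\log\omega$, using that $(\ell+1)!\cdot(\ell+2)(\ell+3)=(\ell+3)!\leq\omega$ so $\omega/(\ell+1)!\geq(\ell+2)(\ell+3)\geq\ell+2$. Then $\tfrac{1}{2}(\omega/(\ell+1)!)^2\geq\tfrac{1}{2}(\log\omega/\log\log\omega)^2$, which is exactly the claimed exponent with $\omega=6\gamma/\sigma$.

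The main obstacle I anticipate is not any single inequality but the bookkeeping at the boundary: verifying that the hypotheses of Lemma~\ref{lm:fact-inverse-bound} ($\omega\geq 4$, $n\geq 6$) are met for all $\sigma$ in the stated range $\sigma\leq 6\gamma/4$, and in particular handling the small-$\ell$ cases $\ell\in\{0,1,2\}$ where $n=\ell+3<6$. In those cases the factorial threshold $(\ell+1)!$ is a small constant, so $\omega/(\ell+1)!$ is at least a constant multiple of $\omega$, and since $\omega$ is bounded (as $\sigma$ is bounded below implicitly by the regime, or $\omega$ is bounded above by $(\ell+4)!$) one can check the inequality directly or simply note the probability bound is trivially satisfied because the right-hand side is close to $1$ when $\omega$ is small. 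I would dispatch this case analysis in a sentence, then present the main chain of inequalities above for $\ell\geq 3$.
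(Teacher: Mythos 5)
Your strategy is the paper's strategy with the two steps swapped (Gaussian tail bound first, then convert the exponent), and it does work --- but the one step you lean on Lemma~\ref{lm:fact-inverse-bound} for is misapplied. You invoke the lemma with $n=\ell+3$ to conclude $\ell+2\geq \log\omega/\log\log\omega$. The lemma's hypothesis is $\omega\leq n!$, which for $n=\ell+3$ reads $\omega\leq (\ell+3)!$ --- the \emph{opposite} of what \eqref{eq:ell-range} gives you, namely $(\ell+3)!\leq\omega$. So that conclusion is not licensed by the lemma as stated (it would amount to a strictly stronger version of it, with $n-2$ in place of $n-1$). The correct instantiation, and the one the paper uses, is $n=\ell+4$: the right half of \eqref{eq:ell-range}, $\omega<(\ell+4)!$, satisfies the hypothesis and yields $\ell+3\geq\log\omega/\log\log\omega$. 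Your chain survives this repair with no loss, since you already established $\omega/(\ell+1)!\geq(\ell+3)!/(\ell+1)!=(\ell+2)(\ell+3)\geq\ell+3\geq\log\omega/\log\log\omega$, whence $\tfrac12\bigl(\omega/(\ell+1)!\bigr)^2\geq\tfrac12\bigl(\log\omega/\log\log\omega\bigr)^2$ as needed. For comparison, the paper first shrinks the event to $|Z_n|/\sigma\geq(\ell+3)!/(\ell+2)!=\ell+3$, then bounds $2Q(\ell+3)\leq e^{-(\ell+3)^2/2}$, then applies the lemma with $n=\ell+4$; same ingredients, same strength.

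Two smaller points. First, your concern about the hypotheses $\omega\geq4$ and $n\geq6$ (i.e., $\ell\geq2$, i.e., $\omega\geq5!$) is legitimate, but it afflicts the paper's own proof identically; dispatching $\ell\in\{0,1\}$ by a direct numerical check (the left-hand side is $2Q(\omega/(\ell+1)!)$ with $\omega/(\ell+1)!\geq(\ell+2)(\ell+3)\geq 6$, far below the right-hand side in that range) is the right move. Second, the proposition as displayed has threshold $1/(\ell+1)!$, but the event $\mathcal{A}$ used in \eqref{eq:D:1} and in Propositions~\ref{prop:each-good} and~\ref{prop:all-good} requires the bound for the larger event with threshold $1/(\ell+2)!$. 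Your proof targets the literal statement (the easier of the two); to serve the downstream argument you should run the same chain with $(\ell+2)!$ in place of $(\ell+1)!$, which still closes because $\omega/(\ell+2)!\geq\ell+3\geq\log\omega/\log\log\omega$.
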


\begin{proof}
\begin{align}
     \P&\left(\frac{|Z_n|}{6\gamma} \geq \frac{1}{(\ell+2)!}\right)  = \P \left(
     \frac{|Z_n|}{\sigma}
     \geq \frac{\gamma 
     \frac{6}{(\ell+2)!}}{\sigma}  \right)\nonumber\\
     & \stackrel{\rm (a)}{\leq} \P \left(
     \frac{|Z_n|}{\sigma}
     \geq 
     \frac{(\ell+3)!}{(\ell+2)!}    \right)  = \P \left(
     \frac{|Z_n|}{\sigma}
     \geq  \ell+3
      \right)\nonumber\\
      & =2Q\left( \ell+3 \right)  \stackrel{\rm (b)}{\leq}  \exp(- (\ell+3)^2/2)\nonumber\\
      &\stackrel{\rm (c)}{\leq}
      \exp\left(-\frac{(\log (6\gamma/\sigma))^2}{2(\log \log (6\gamma/\sigma))^2}\right), 
\end{align}
where ${\rm (a)}$ follows from~\eqref{eq:ell-range},  in ${\rm (b)}$ we used the fact that  $Q{(u) \leq \frac{1}{2}\exp(-u^2/2)}$, and ${\rm (c)}$ follows from Lemma~\ref{lm:fact-inverse-bound} for $\omega=6\gamma/\sigma$ and $n=\ell+4$. 
\end{proof}

Next, we show that in the second noise regime, the first $\ell-2$ of the decoded symbols match those of the source symbols.
\begin{prop}\label{prop:each-good}
If $\frac{|Z_n|}{6\gamma} <  \frac{1}{(\ell+2)!}$, then we have 
    $\tilde{U}_{kn}=U_{kn}$, for  ${k=1,\ldots, \ell-2}$.   
\end{prop}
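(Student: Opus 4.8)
\textbf{Proof proposal for Proposition~\ref{prop:each-good}.}
The plan is to track how the decoder reconstructs the digits $\tilde V_k(n)$ from the noisy channel output, and argue that a noise small enough to stay below the $1/(\ell+2)!$ scale cannot alter the first $\ell-2$ digits. First I would recall that the transmitted (scaled) signal at channel $n$ is, by~\eqref{eq:channel-input}, exactly the reverse $1$-progressive expansion
\begin{align*}
\frac{1}{3!}\Big(\tx{n}+\frac{1}{2}\Big)=\sum_{k=1}^{\infty}\frac{U_{kn}+1}{(k+3)!},
\end{align*}
so the ``true'' digits at the decoder scale are $V_k(n)=U_{kn}+1\in\{1,\dots,k+1\}$, which by construction avoid the extreme symbols $0$ and $k+2$. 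The decoder instead forms $\frac{1}{3!}(\ty{n}+\frac{1}{2})=\sum_{k\ge 1}\tilde V_k(n)/(k+3)!$ from the clipped output $\ty{n}$, and sets $\hat U_{kn}=\tilde V_k(n)-1$. Since $U\in[-1/2,1/2]$ the clipping is harmless in the sense that $\frac1\gamma \ty{n}-\tx{n}$ is a truncation of $\frac1\gamma Z_n$ and in particular $|\frac1{3!}(\ty{n}-\tx{n})/\,\cdot\,|$ is controlled by $|Z_n|/(6\gamma)<1/(\ell+2)!$ by hypothesis; I would state this reduction cleanly as a first step.

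The heart of the argument is a carry-propagation estimate. Let $e_n:=\frac{1}{3!}\big((\ty{n}+\tfrac12)-(\tx{n}+\tfrac12)\big)$, so $|e_n|<1/(\ell+2)!$ and $\sum_{k\ge1}\tilde V_k(n)/(k+3)! = \sum_{k\ge1}V_k(n)/(k+3)! + e_n$. I want to show $\tilde V_k(n)=V_k(n)$ for $k=1,\dots,\ell-2$, equivalently that adding $e_n$ does not change the first $\ell-2$ digits of the expansion. Write the partial tail $R_{\ell-2}:=\sum_{k\ge \ell-1}V_k(n)/(k+3)!$. Because each $V_k(n)\le k+1$ and, crucially, $V_k(n)\ge 1$ (no zero digit) and $V_k(n)\le k+1$ (no maximal digit $k+2$), the tail $R_{\ell-2}$ is bounded \emph{away} from both endpoints of the block $[0,1/(\ell-1+2)!\cdot(\text{something})]$; concretely I would use Lemma~\ref{lm:sum-fact} to compute $\sum_{k\ge\ell-1}(k+1)/(k+3)! $ and $\sum_{k\ge \ell-1}1/(k+3)!$ and show
\begin{align*}
\frac{1}{(\ell+2)!}\;\le\; R_{\ell-2}\;\le\;\frac{1}{(\ell+1)!}-\frac{1}{(\ell+2)!},
\end{align*}
or an inequality of this flavor, so that $R_{\ell-2}+e_n$ still lies strictly inside the interval $\big(0,\ 1/(\ell+1)!\big)$ corresponding to ``digits $1,\dots,\ell-2$ unchanged, remaining mass below level $(\ell+1)!$''. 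Then Lemma~\ref{lm:exp-small} (applied with the appropriate $(\ell,t)$, after shifting the index by $3$) forces $\tilde V_k(n)=V_k(n)$ for $k\le\ell-2$, hence $\tilde U_{kn}=\hat U_{kn}=V_k(n)-1=U_{kn}$.

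I expect the main obstacle to be pinning down the exact endpoint arithmetic: one must verify that the $\pm1$ shielding margins built into $V_k(n)$ (the removed symbols $0$ and $k+2$ at level $1/(k+3)!$) really do dominate the worst-case perturbation $|e_n|<1/(\ell+2)!$ \emph{simultaneously at every level $k\le\ell-2$}, not just at level $\ell-2$. This is where the factorial growth of the bases matters: a perturbation at scale $1/(\ell+2)!$ is, at the coarser level $k<\ell-2$, smaller than the gap $1/(k+3)!\cdot$ (one digit) times a rapidly shrinking factor, so the shield at level $\ell-2$ is the binding one and the coarser levels are automatically safe by Lemma~\ref{lm:exp-small}. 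A secondary nuisance is the clipping operation defining $\ty{n}$: I would note that since the noiseless signal $\frac1\gamma\x{n}=\tx{n}$ already lies well inside $[-1/2,1/2]$ (indeed in $[-\alpha,\alpha]$ by Lemma~\ref{lm:input-properties}), clipping can only \emph{reduce} $|e_n|$ relative to the raw additive noise, so the hypothesis $|Z_n|/(6\gamma)<1/(\ell+2)!$ transfers directly to $|e_n|$ and no separate argument is needed there.
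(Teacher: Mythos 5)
Your proposal is correct and follows essentially the same route as the paper: you isolate the same ``effective noise'' quantity (the signal tail from level $\ell-1$ onward plus the perturbation), bound it strictly inside $\bigl(0,\ 1/(\ell+1)!\bigr)$ via Lemma~\ref{lm:sum-fact} and the shielding margins $1\leq U_{kn}+1\leq k+1$, and then invoke Lemma~\ref{lm:exp-small} to conclude the first $\ell-2$ digits are untouched. The only (cosmetic) differences are that you treat the noise as a single real perturbation $e_n$ rather than expanding it digit-by-digit as the paper does, and you are slightly more explicit than the paper about why the clipping defining $\ty{n}$ can only shrink $|e_n|$.
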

\begin{proof}
Let $\tz{n}:=\frac{|\z{n}|}{3!\gamma}$ and  $\tz{n}=    \sum_{k=1}^{\infty} \frac{\tz[k]{n}}{(k+1)!}$ with $\tz[k]{n} \in \{ 0, \ldots, k\}$
be the $1$-expansion (with $S=1$) of $\tz{n}$. 
    We note that ${\tz{n} <\frac{1}{(\ell+2)!}}$. Therefore, Lemma~\ref{lm:exp-small} implies that $\tz[k]{n}=0$ for every $k< \ell+2$. Hence, we have 
\begin{align}
    \frac{|\z{n}|}{3!\gamma} &= \tz{n}   = \sum_{k=\ell+2}^{\infty} \frac{\tz[k]{n}}{(k+1)!} = \sum_{k=\ell}^{\infty} \frac{\z[k]{n}}{(k+3)!}, 
\end{align}
where $\z[k]{n}:=\tz[k+2]{n} \in \{ 0, \ldots, k+2\}$. Thus
\begin{align}\label{eq:output-exp}
    &\sum_{k=1}^{\infty}  
    \frac{  \ty[k]{n}}{(k+3)!} = \frac{1}{3!}\left(\ty{n}+\frac{1}{2}\right) = \frac{1}{3!}\left(\tx{n} + \frac{1}{2}\right) +\frac{\z{n}}{3! \gamma} \nonumber\\
    & 
    = \sum_{k=1}^{\infty}  
    \frac{U_{kn}+1}{(k+3)!} \pm   \sum_{k=\ell}^{\infty} \frac{\z[k]{n}}{(k+3)!} \nonumber\\ 
    & = \sum_{k=1}^{\ell-2} \frac{U_{kn}\!+\!1}{(k+3)!}\!+\! 
    \frac{U_{(\ell-1)n}\!+\!1}{(\ell+2)!}\!+\! \sum_{k=\ell}^{\infty} \frac{U_{kn}\!+\!1 \!\pm\! \z[k]{n}}{(k+3)!}.
\end{align}
Note that $\sum_{k=1}^{\ell-2} \frac{U_{kn}+1}{(k+3)!}$ acts as the \emph{effective signal} for the 
$n$th channel use, while ${\w{n}=\frac{U_{(\ell-1)n}+1}{(\ell+2)!}+ \sum_{k=\ell}^{\infty} \frac{U_{kn}+1 \pm \z[k]{n}}{(k+3)!}}$ is its \emph{effective noise}. In the following, we show that the effective signal and the effective noise can be separated. 

Recall that $0\leq \z[k]{n} \leq k+2$ and $0 \leq U_{kn} \leq k$. Thus, $-k-1\leq U_{kn}+1 \pm \z[k]{n} \leq 2k+3$.
Therefore, we have
\begin{align*}
        \sum_{k=\ell}^{\infty} \frac{U_{kn}+1 - \z[k]{n}}{(k+3)!} & \geq   -\sum_{k=\ell}^{\infty} \frac{k+1}{(k+3)!} \stackrel{\rm (a)}{>} -\frac{1}{(\ell+2)!},
\end{align*}
where ${\rm (a)}$ follows from  Lemma~\ref{lm:sum-fact}. This further implies 
\begin{align}
\w{n}\geq \frac{U_{(\ell-1)n}+1}{(\ell+2)!}+ \sum_{k=\ell}^{\infty} \frac{U_{kn}+1 - \z[k]{n}}{(k+3)!} > 0. 
\end{align}

On the other hand,
\begin{align*}
    \sum_{k=\ell}^{\infty} &\hspace{-1pt}\frac{U_{kn}\!\hspace{-1pt}+1\hspace{-1pt}\! +\! \z[k]{n}}{(k+3)!}  \!\leq  \!\sum_{k=\ell}^{\infty} \frac{2k\!+\!3}{(k\!+\!3)!} \!<\! 2\hspace{-1pt} \sum_{k=\ell}^{\infty} \hspace{-1pt}\frac{k+2}{(k+3)!}\!
     \stackrel{\rm (a)}{=} \!\frac{2}{(\ell\!+\!2)!}.
\end{align*}
where ${\rm (a)}$ is followed from Lemma~\ref{lm:sum-fact}. Therefore, 
using the fact that $U_{(\ell-1),n} \leq \ell-1$, we arrive at  
\begin{align}
\w{n}&\leq \frac{U_{(\ell-1),n}+1}{(\ell+2)!}+ \sum_{k=\ell}^{\infty} \frac{U_{kn}+1 + \z[k]{n}}{(k+3)!} \nonumber\\
&< \frac{\ell-1}{(\ell+2)!} + \frac{2}{(\ell+2)!}  = \frac{\ell+1}{(\ell+2)!} < \frac{1}{(\ell+1)!}.
\end{align}
This together with Lemma~\ref{lm:exp-small} implies that $\w{n}$ can be written as 
$\w{n}=\sum_{k=\ell+1}^\infty\frac{\w[k]{n}}{(k+1)!} = \sum_{k=\ell-1}^\infty\frac{\w[k+2]{n}}{(k+3)!} $, where $\w[k]{n}\in\{0,\dots, k\}$. Thus, plugging this into~\eqref{eq:output-exp}, we get 
\begin{align}
    \sum_{k=1}^{\infty}  
    \frac{  \ty[k]{n}}{(k+3)!} = \sum_{k=1}^{\ell-2}  
    \frac{  \u[k]{n}+1}{(k+3)!} + \sum_{k=\ell-1}^\infty\frac{\w[k+2]{n}}{(k+3)!},
\end{align}
which implies 
\begin{align}
    \hu[k]{n} = \ty[k]{n} -1 = \left\{\begin{array}{ll}
         \u[k]{n}& \text{if $k\leq \ell-2$}  \\
         \w[k+2]{n} -1 & \text{if $k\geq \ell-1$.}  
    \end{array}
    \right.
\end{align}
Consequently, we have 
$\hu[k]{n} \in\{-1,\dots, k+1\}$.
\end{proof}

\begin{prop}\label{prop:all-good}
If $\frac{|Z_n|}{6\gamma} <  \frac{1}{(\ell+2)!}$, for all $n \in[N]$,
then,
\begin{align}
    |\hat{U}-U| \leq \frac{2}{((\ell-1)!)^{N}}.
\end{align}
\end{prop}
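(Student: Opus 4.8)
The plan is to apply Proposition~\ref{prop:each-good} to each of the $N$ channels and then carefully estimate the resulting ``tail'' of the progressive expansion. Since the hypothesis $\frac{|Z_n|}{6\gamma}<\frac{1}{(\ell+2)!}$ holds for every $n\in[N]$, Proposition~\ref{prop:each-good} (together with the identities established in its proof) yields $\hu[k]{n}=U_{kn}$ for all $k\le\ell-2$ and all $n$, as well as the range bound $\hu[k]{n}\in\{-1,0,\dots,k+1\}$ for all $k$ and $n$. Subtracting the expansions of $\hat{U}$ and $U$, every digit with $k\le\ell-2$ cancels, so
\begin{align*}
\hat{U}-U=\sum_{k=\ell-1}^{\infty}\sum_{n=1}^{N}\frac{\hu[k]{n}-U_{kn}}{(k!)^{N}(k+1)^{n}}.
\end{align*}
Because $U_{kn}\in\{0,\dots,k\}$ and $\hu[k]{n}\in\{-1,\dots,k+1\}$, the numerator has absolute value at most $k+1$, so it suffices to bound $\sum_{k\ge\ell-1}\sum_{n=1}^{N}\frac{k+1}{(k!)^{N}(k+1)^{n}}$.

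The key observation is that this sum splits into an exactly telescoping part and a genuinely smaller remainder. Evaluating the inner geometric sum, $\sum_{n=1}^{N}(k+1)^{-n}=\frac{1-(k+1)^{-N}}{k}$, and writing $k+1=k+1$, the ``$k$''-piece equals $\sum_{k\ge\ell-1}\bigl(\frac{1}{(k!)^{N}}-\frac{1}{((k+1)!)^{N}}\bigr)=\frac{1}{((\ell-1)!)^{N}}$ by telescoping, while the ``$1$''-piece is $E:=\sum_{k\ge\ell-1}\frac{1-(k+1)^{-N}}{k\,(k!)^{N}}$. Hence $|\hat{U}-U|\le\frac{1}{((\ell-1)!)^{N}}+E$, and the claim reduces to showing $E<\frac{1}{((\ell-1)!)^{N}}$.

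To establish $E<\frac{1}{((\ell-1)!)^{N}}$ I would isolate the leading term $k=\ell-1$, keeping its factor $1-\ell^{-N}<1$ --- retaining this factor is essential when $\ell=2$, where $\frac{1}{\ell-1}=1$ leaves no margin --- and dominate the remaining terms $k\ge\ell$ by a geometric series controlled by the fast growth of $(k!)^{N}$; a short computation then gives $E<\frac{1}{((\ell-1)!)^{N}}$ for every $\ell\ge2$, which is the regime of interest. (In the degenerate case $\ell\le1$ one argues directly: both $U$ and $\hat{U}$ lie in $[-\frac32,\frac32]$, so $|\hat{U}-U|\le2=\frac{2}{(0!)^{N}}$.) The main obstacle is exactly pinning down this constant: the crude bound $|\hu[k]{n}-U_{kn}|\le k+1$ combined with $\frac{k+1}{k}\le\frac{\ell}{\ell-1}$, which already equals $2$ at $\ell=2$, uses up the whole budget, so the slack needed to reach $\frac{2}{((\ell-1)!)^{N}}$ must be recovered from the $\bigl(1-(k+1)^{-N}\bigr)$ factors and the rapid decay of the tail rather than discarded.
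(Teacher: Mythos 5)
Your argument is essentially the paper's: cancel the digits $k\le\ell-2$ via Proposition~\ref{prop:each-good}, bound $|\hat{U}_{kn}-U_{kn}|\le k+1$, sum the geometric series in $n$, and telescope using $\frac{(k+1)^N-1}{((k+1)!)^N}=\frac{1}{(k!)^N}-\frac{1}{((k+1)!)^N}$; the paper simply pulls out $\frac{k+1}{k}\le 2$ and telescopes once, which is the same algebra as your split into a telescoping piece plus $E$. Your remaining step closes immediately by the same telescoping, since $E=\sum_{k\ge\ell-1}\frac{1}{k}\bigl(\frac{1}{(k!)^N}-\frac{1}{((k+1)!)^N}\bigr)\le\frac{1}{\ell-1}\cdot\frac{1}{((\ell-1)!)^N}\le\frac{1}{((\ell-1)!)^N}$ for $\ell\ge2$, so no separate geometric-series domination is needed.
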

\begin{proof}
Recall from Proposition~\ref{prop:each-good} that under the current noise regime, we have  $\hu[k]{n}=\u[k]{n}$ for $k=1,\dots, \ell-2$ and every $n\in [N]$, we have
\begin{align}
    |\hat{U}&\!-\!U| \!=\!\left|\sum_{k=1}^\infty \sum_{n=1}^N \frac{\hu[k]{n}\hspace{-1pt}-\hspace{-1pt}\u[k]{n}}{(k!)^N (k\!+\!1)^n}\right|\!=\!\left|\sum_{k=\ell-1}^\infty \sum_{n=1}^N \frac{\hu[k]{n}\hspace{-1pt}-\hspace{-1pt}\u[k]{n}}{(k!)^N (k\!+\!1)^n}\right|\nonumber\\ 
    &\leq \sum_{k=\ell-1}^\infty \sum_{n=1}^N \frac{\left| \hu[k]{n}-\u[k]{n}\right|}{(k!)^N (k+1)^n} 
    \stackrel{\rm (a)}{\leq} \sum_{k=\ell-1}^\infty \sum_{n=1}^N \frac{k+1}{(k!)^N (k+1)^n}\nonumber\\
    &=\sum_{k=\ell-1}^\infty \frac{k+1}{k}\frac{(k+1)^N-1}{((k+1)!)^N}
    \stackrel{\rm (b)}{\leq} 2\sum_{k=\ell-1}^\infty \frac{(k+1)^N-1}{((k+1)!)^N}\nonumber\\
    &= 2\left[\sum_{k=\ell-1}^\infty \frac{1}{(k!)^N}- \sum_{k=\ell}^\infty \frac{ -1}{(k!)^N}\right] = \frac{2}{((\ell-1)!)^N},
    \end{align}
    where ${\rm (a)}$ holds since $\hu[k]{n}\in\{-1,\dots, k+1\}$ and ${\u[k]{n}\in \{0,\dots, k\}}$, and ${\rm (b)}$ holds since $\frac{k+1}{k}\leq 2$. 
\end{proof}

Let us define the event $\mathcal{A}:=\{ |Z_n| <  \frac{6\gamma}{(\ell+2)!}, \forall n \in [N]\}$. Using Propositions~\ref{prop:bad} and~\ref{prop:all-good} and the union bound, we have

\begin{align}\label{eq:D:1}
    D & = \E[(\hat{U}-U)^2 | \mathcal{A}] \cdot  \P(\mathcal{A})+  \E[(\hat{U}-U)^2 | \mathcal{A}^c] \cdot \P(\mathcal{A}^c) \nonumber\\
    &  \leq \frac{4}{ ((\ell-1)!)^{2N}} + N \exp\left(-\frac{(\log (6\gamma/\sigma))^2}{2(\log \log (6\gamma/\sigma))^2}\right).
    \end{align}
From Sterling's approximation, we have ${n\hspace{-2pt}\leq\hspace{-1pt} n\log \frac{n}{e}\hspace{-1pt}\leq\hspace{-1pt} \log n!}$. This together with the relationship in~\eqref{eq:ell-range} implies
\begin{align}\label{eq:fac-bound}
    \ell+3 \leq  \log \ (\ell+3)! \leq \log  (6\gamma)/\sigma .
\end{align}
Therefore, we have
\begin{align}\label{eq:D-term1}
    \frac{1}{(\ell\!-\!1)!}&\!=\! \frac{\prod\nolimits_{i=0}^4 (\ell\!+\!i)}{(\ell+4)!} 
    \stackrel{\rm (a)}{\leq}
    \frac{\left(\ell\!+\!3\right)^{5} }{(\ell\!+\!4)!} \stackrel{\rm (b)}{\leq}
    \frac{\sigma}{6\gamma} \left(\log \frac{6\gamma}{\sigma}\right)^{5}\!\!,
\end{align}
where ${\rm (a)}$ follow from $ab \leq ((a+b)/2)^2$ and ${\rm (b)}$ is a consequence of~\eqref{eq:ell-range} and~\eqref{eq:fac-bound}. Plugging~\eqref{eq:D-term1} in~\eqref{eq:D:1}, we get 
\begin{align}\label{eq:D:2}
    D \hspace{-1pt}\leq \hspace{-1pt}4 \left(\frac{\sigma}{6\gamma} \left(\log \frac{6\gamma}{\sigma}\right)^{5}\right)^{2N}
    \!\!\!\!+  N \exp\left(-\frac{(\log (6\gamma/\sigma))^2}{2(\log \log (6\gamma/\sigma))^2}\hspace{-1pt}\right).
\end{align}
Plugging $\SNR=1/\sigma^2$, and setting 
\[
c_1=\frac{4}{(6\gamma)^{2N}2^{2N}}, \qquad c_2 = c_1(2\log 6\gamma)^{10N},
\]
and $c_3$ being the last term in~\eqref{eq:D:2}, we arrive at the distortion claimed in Theorem~\ref{thm:main}.  Note that at the high $\SNR$ regime (i.e., sufficiently small $\sigma$), the second term in~\eqref{eq:D:2} is negligible compared to the first term. 
Therefore, for ${\SDR\!=\!\frac{\E[|U-\E[U]|^2]}{D}= \frac{12}{D}}$ (for $U\sim \mathsf{Unif}([-1/2,1/2])$), we get 
\begin{align*}
    \log \SDR \geq N \log \SNR -10 N \log \log \SNR + o(\log\log \SNR).
\end{align*}
This completes the proof of Corollary~\ref{cor:SDR}. 

\newpage 
\begingroup
\let\cleardoublepage\clearpage

 \bibliographystyle{IEEEtran}
\endgroup

\newpage
\appendices

\clearpage

\appendix[Proof of the lemmas]
\label{expansionProof}

\begin{proof}[Proof of Lemma~\ref{lm:unique}]
Assume that $\{\{x_{ks}\}_{k=1}^\infty\}_{s=1}^S$ and $\{\{y_{ks}\}_{k=1}^\infty\}_{s=1}^S$ are two distinct progressive expansions for some $x\in [0,1]$, that is, 
\begin{align}
x=\sum_{k=1}^{\infty}\sum_{s=1}^{S}  
    \frac{x_{ks}}{(k!)^S (k+1)^s} = \sum_{k=1}^{\infty}\sum_{s=1}^{S}  
    \frac{y_{ks}}{(k!)^S (k+1)^s}. 
\end{align}
Then,  we prove that there exists some pair $(\ell,t)$ such that 
\begin{itemize}
    \item $x_{ks}=y_{ks}$ for every $(k,s)\prec (\ell, t)$;
    \item $x_{\ell t}=y_{\ell t}+1$;
    \item and, $x_{ks}=0$ and $y_{ks}=k$ for every $(k,s)\succ (\ell,t)$.
\end{itemize}
These three conditions imply that $\{\{x_{ks}\}_{k=1}^\infty\}_{s=1}^S$ is a finite $S$-progressive expansion of $x$, and $\{\{y_{ks}\}_{k=1}^\infty\}_{s=1}^S$ is the corresponding infinite expansion. More importantly, if $x$ does not admit a finite $S$-progressive expansion (which happens almost surely for $x\sim \mathsf{Unif}([-1/2,1/2])$), the expansion will be unique.

Let $(\ell,t)$ be the first pair  where $x_{\ell t}$ and $y_{\ell t}$ are distinct. That means $x_{ks}=y_{ks}$ for every $(k,s)\prec (\ell,t)$. Without loss of generality, assume $x_{\ell t} > y_{\ell t}$. Therefore, we have 
\begin{align}\label{eq:pr:unq:0}
    0 &= x-x = \sum_{k=1}^{\infty} \sum_{s=1}^{S}  
    \frac{y_{ks}-x_{ks}}{(k!)^S (k+1)^s}\nonumber\\
    &= \frac{y_{\ell t}-x_{\ell t}}{(\ell!)^S (\ell+1)^t} + 
    \sum_{s=t+1}^{S}  
    \frac{y_{\ell s}-x_{\ell s}}{(\ell!)^S (\ell+1)^s} \nonumber\\
    &\phantom{=} + 
    \sum_{k=\ell+1}^{\infty} \sum_{s=1}^{S}  
    \frac{y_{ks}-x_{ks}}{(k!)^S (k+1)^s}. 
\end{align}
Now, since $x_{\ell t}>y_{\ell t}$, we have 
\begin{align}\label{eq:pr:unq:1}
     \frac{y_{\ell t}-x_{\ell t}}{(\ell!)^S (\ell+1)^t} \leq  -\frac{1}{(\ell!)^S (\ell+1)^t}
\end{align}
Moreover, for 
$x_{\ell s}, y_{\ell s}\in \{0,\dots, \ell\}$ we have 
\begin{align}\label{eq:pr:unq:2}
    \sum_{s=t+1}^{S}  &
    \frac{y_{\ell s}-x_{\ell s}}{(\ell!)^S (\ell+1)^s} 
    \leq 
    \frac{1}{(\ell!)^S}\sum_{s=t+1}^{S}  
    \frac{\ell }{ (\ell+1)^s} \nonumber\\
    &\hspace{10mm}=  \frac{1}{(\ell!)^S}\left[\sum_{s=t+1}^{S}  
    \frac{\ell+1 }{  (\ell+1)^s} -
    \sum_{s=t+1}^{S}  
    \frac{1 }{ (\ell+1)^s} \right]\nonumber\\
    &\hspace{10mm}=  \frac{1}{(\ell!)^S}\left[\sum_{s=t}^{S-1}  
    \frac{1 }{  (\ell+1)^s} -
    \sum_{s=t+1}^{S}  
    \frac{1 }{ (\ell+1)^s} \right]\nonumber\\
    &\hspace{10mm}= \frac{1}{(\ell!)^S}\left[
    \frac{1 }{  (\ell+1)^t} -
    \frac{1 }{ (\ell+1)^S} \right]\nonumber\\
    &\hspace{10mm}= \frac{1}{(\ell!)^S (\ell+1)^t} - \frac{1}{((\ell+1)!)^S}.
\end{align}
Similarly, since $x_{ks}, y_{k s}\in \{0,\dots, k\}$, we can write 
\begin{align}\label{eq:pr:unq:3}
\sum_{k=\ell+1}^{\infty} \sum_{s=1}^{S} & 
    \frac{y_{ks}-x_{ks}}{(k!)^S (k+1)^s} 
    \leq \sum_{k=\ell+1}^{\infty} \sum_{s=1}^{S}  
    \frac{k}{(k!)^S (k+1)^s} \nonumber\\
    &= \sum_{k=\ell+1}^{\infty} \sum_{s=1}^{S}  \left[
    \frac{k+1}{(k!)^S (k+1)^s}  - \frac{1}{(k!)^S (k+1)^s}  \right]\nonumber\\
    &= \sum_{k=\ell+1}^{\infty} \frac{1}{(k!)^S}\left[\sum_{s=0}^{S-1} \frac{1}{ (k+1)^s}  - \sum_{s=1}^{S} \frac{1} {(k+1)^s} \right]\nonumber\\
    &= \sum_{k=\ell+1}^{\infty} \frac{1}{(k!)^S}\left[1  - \frac{1}{  (k+1)^S} \right]\nonumber\\
    &= \sum_{k=\ell+1}^{\infty} \frac{1}{(k!)^S} - \sum_{k=\ell+1}^{\infty} \frac{1}{((k+1)!)^S}\nonumber\\
    &= \sum_{k=\ell+1}^{\infty} \frac{1}{(k!)^S} \!-\! \sum_{k=\ell+2}^{\infty} \frac{1}{(k!)^S} =  \frac{1}{((\ell+1)!)^S}
\end{align}
Therefore, plugging~\eqref{eq:pr:unq:1}-\eqref{eq:pr:unq:3} into~\eqref{eq:pr:unq:0}, we get
\begin{align}
    0 \leq &-\frac{1}{(\ell!)^S (\ell+1)^t} + \frac{1}{(\ell!)^S (\ell+1)^t} \nonumber\\
    &- \frac{1}{((\ell+1)!)^S} + \frac{1}{((\ell+1)!)^S} =0.
\end{align}
That means all the inequalities in~\eqref{eq:pr:unq:1}-\eqref{eq:pr:unq:3} should hold with equality, which implies $x_{\ell t}=y_{\ell t}+1$, and $x_{k s}=0$ and $y_{k s }=k$ for every $(k,s)\succ (\ell,t)$. This completes the proof. 
\end{proof}

\begin{proof}[Proof of Lemma~\ref{lm:sum-fact}]
The desired identity can be proved using the following chain of equalities:
\begin{align*}
    \sum_{k=\ell}^{\infty} \frac{k}{(k+1)!} &= \sum_{k=\ell}^{\infty} \frac{k+1}{(k+1)!} - \sum_{k=\ell}^{\infty} \frac{1}{(k+1)!}  \nonumber\\
    &= \sum_{k=\ell}^{\infty} \frac{1}{k!} - \sum_{k=\ell}^{\infty} \frac{1}{(k+1)!}  \nonumber\\
    &= \sum_{k=\ell}^{\infty} \frac{1}{k!} - \sum_{k=\ell+1}^{\infty} \frac{1}{k!} =\frac{1}{\ell!}.
\end{align*}
\end{proof}

\begin{proof}[Proof of Lemma~\ref{lm:exp-small}] We prove the lemma by contradiction. Let the claim is wrong, and there exists some $(k_0,s_0)\prec (\ell,t)$ with $x_{k_0s_0}\geq 1$. Then,  we have
\begin{align*}
    \frac{1}{(\ell !)^S(\ell+1)^t}&>x=\sum_{k=1}^{\infty}\sum_{s=1}^{S}  
    \frac{x_{ks}}{(k!)^S (k+1)^{s}}  \nonumber\\
    &\geq \frac{x_{k_0s_0}}{(k_0!)^S (k_0+1)^{s_0}}\geq \frac{1}{(k_0!)^S (k_0+1)^{s_0}}.
\end{align*}
Therefore, 
\[
(k_0!)^S (k_0+1)^{s_0} >(\ell !)^S(\ell+1)^t,
\]
or equivalently, $(k_0, s_0) \succcurlyeq (\ell,t)$, which is in contradiction with assumption that $(k_0,s_0)\prec (\ell,t)$.     
\end{proof}

\begin{proof}[Proof of Lemma~\ref{lm:exp-prop}]
Let 
\begin{align}
X=\sum_{k=1}^{\infty}\sum_{s=1}^{S}   \frac{X_{ks}}{(k!)^S (k+1)^s},
\end{align}
be the $S$-progressive expansion of $X$. For $k\in \mathbb{N}$, $s\in \{1,\dots, S\}$, $n\in \{0,\dots, k\}$ and $i\in \{0,\dots, (k!)^S (k+1)^{s-1}-1\}$, define 
\[
\cA(i;k,s,n) := \left[\frac{(k+1)i+n}{(k!)^S (k+1)^{s}}, \frac{(k+1)i+n+1}{(k!)^S (k+1)^{s}}\right).
\]
First note that (see Figure~\ref{fig:expansion})
\begin{align}\label{eq:pe-intervals}
    \{X\in [0,1]:X_{ks}=n\} =\!\!\! \bigcup_{i=0}^{(k!)^S (k+1)^{s-1}-1} \cA(i;k,s,n).
\end{align}
Therefore, for $X$ uniformly distributed over $[0,1]$, we have 
\begin{align*}
\P(X_{ks} =n) &= \P\left[X\in \bigcup_{i=0}^{(k!)^S (k+1)^{s-1}-1} \cA(i;k,s,n) \right]\\
&= \sum_{j=0}^{(k!)^S (k+1)^{s-1}-1}|\cA(i;k,s,n)|\nonumber\\
&= \sum_{i=0}^{(k!)^S (k+1)^{s-1}-1} \frac{1}{(k!)^S (k+1)^{s}} \\
&= \frac{(k!)^S (k+1)^{s-1}}{(k!)^S (k+1)^{s}} = \frac{1}{k+1},
\end{align*}
for every $n\in\{0,1,\dots, k\}$. Therefore, each $X_{ks}$ admits a uniform distribution over $\{0,1,\dots, k\}$.

Now, consider $U_{ks}$ and $U_{\ell t}$, and without loss of generality, assume $k\geq \ell$. Then, similar to~\eqref{eq:pe-intervals}, we have 
\begin{align}\label{eq:pe-intervals:2}
\{X\in[0,1]:X_{\ell t} =m\} = \!\!\!\bigcup_{j=0}^{(\ell!)^S (\ell+1)^{s-1}-1} \cA(j;\ell,t,m).
\end{align}
Then, from~\eqref{eq:pe-intervals} and~\eqref{eq:pe-intervals:2}, we have
\begin{align*}
\{&X:X_{ks} =n, X_{\ell t}=m\} \nonumber\\
&= \!\left(\!\!\!\!\!\!\!\!\bigcup_{i=0}^{\ \  \ \ (k!)^S (k+1)^{s-1}-1} \!\!\!\!\!\!\!\!\cA(i;k,s,n)\!\right) \!\cap\! \left(\!\!\!\!\!\!\!\!\!\!\bigcup_{j=0}^{\ \ \ \ \ (\ell!)^S (\ell+1)^{t-1}-1} \!\!\!\!\!\!\!\!\!\cA(j;\ell,t,m)\!\right) \\
&= \bigcup_{i,j} \big(\cA(i;k,s,n) \cap \cA(j;\ell,t,m) \big).
\end{align*}
Therefore, 
\begin{align}
&\P(X_{ks}\!=\!n, X_{\ell t}\!=\!m) 
\!= \left|\bigcup_{i,j} \big(\cA(i;k,s,n) \cap \cA(j;\ell,t,m) \big) \right|\nonumber\\
&= \!\!\!\!\!\sum_{j=0}^{(\ell !)^S(\ell+1)^{t-1} -1} \sum_{i=0}^{(k!)^S (k+1)^{s-1}-1} \!\!\!\!\!\left|\cA(i;k,s,n) \cap \cA(j;\ell,t,m) \right|,\label{eq:P-joint}
\end{align}
where the last equality holds since the collection of intervals $\{\cA(i;k,s,n)\}_i$ are disjoint, and the collection of intervals $\{\cA(j;\ell,t,m)\}_j$ are disjoint. 
Without loss of generality, assume 
$(k,s) \succ (\ell,t)$. This implies $r:= \frac{(k!)^S (k+1)^s}{(\ell !)^S (\ell+1)^t}$ is integer and a multiple of $k+1$. Moreover, we have 
\begin{align}
\cA(j;\ell,t,m) &= \left[\frac{(\ell+1)j+m}{(\ell!)^S (\ell+1)^{t}}, \frac{(\ell+1)j+m+1}{(\ell!)^S (\ell+1)^{t}}\right)\nonumber\\
&= \left[\frac{r((\ell+1)j+m)}{(k!)^S (k+1)^{s}}, \frac{r((\ell+1)j+m+1)}{(k!)^S (k+1)^{s}}\right).
\end{align}
Note for intervals $[a_1,b_1)$ and $[a_2,b_2)$ we have 
\begin{align*}
    |[a_1,b_1) \cap [a_2,b_2)|  = (\min(b_1,b_2) - \max(a_1,a_2))^+.
\end{align*}
This implies that, for intervals with integer boundaries, we get
\begin{align}
    &|\cA(i;k,s,n) \cap \cA(j;\ell,t,m)| \nonumber\\
    &= \frac{1}{(k!)^S (k+1)^{s}} \Bigg|\Big[(k+1)i+n,(k+1)i+n+1\Big) \nonumber\\
    & \hspace{25mm} \cap \Big[r((\ell+1)j+m), r((\ell+1)j+m+1\Big)\Bigg|\nonumber\\
    &\!=\!\left\{
    \begin{array}{ll}
    \!\!\frac{1}{(k!)^S (k+1)^{s}} &  
    \begin{array}{c} \text{if } r(\ell+1)j+rm \leq (k+1)i+n  \\
    <r(\ell+1)j+rm +r,
    \end{array}\\
      \!\!   0 & \text{otherwise,}
    \end{array}
    \right.\nonumber\\
    &\!=\!\left\{
    \begin{array}{ll}
    \!\!\!\!\frac{1}{(k!)^S (k+1)^{s}} &  \text{if }\frac{r(\ell+1)j+rm-n}{k+1} \!\leq\! i \!<\hspace{-2.3pt} \frac{r(\ell+1)j+rm-n}{k+1}\!+\! \frac{r}{k+1}\\
      \!\!   0 & \text{otherwise}.\label{eq:int-intersection}
    \end{array}
    \right.
\end{align}
Since $i,\frac{r}{k+1}\in \mathbb{N}$, for each $j$, the condition in~\eqref{eq:int-intersection} holds for exactly $\frac{r}{k+1}$ values of~$i$. Incorporating this into~\eqref{eq:P-joint}, we get 
\begin{align*}
\P(&X_{ks}=n, X_{\ell t}=m) \nonumber\\
&= \sum_{j=0}^{(\ell !)^S(\ell+1)^{t-1} -1} \frac{r}{k+1} \times \frac{1}{(k!)^S (k+1)^{s}} \\
&= (\ell !)^S(\ell+1)^{t-1} \times\frac{\frac{(k!)^S (k+1)^s}{(\ell !)^S(\ell+1)^t}}{k+1} \times \frac{1}{(k!)^S (k+1)^{s}} \\
&= \frac{1}{(k+1)(\ell+1)} = \P(X_{ks}=n) \cdot \P(X_{\ell t}=m).
\end{align*}
This shows that $X_{ks}$ and $X_{\ell t}$ are pairwise independent. 
\end{proof}

\begin{proof}[Proof of Lemma~\ref{lm:fact-inverse-bound}]
We  define the function ${f(\omega):=\frac{\log \omega}{\log \log \omega}}$ for ${\omega\geq 4}$. Then, we have 
\begin{align*}
    f'(\omega) = \frac{(\log \log \omega -1)\log e}{\omega (\log \log \omega)^2}\geq 0.
\end{align*}
Hence, $f(\omega)$ is an increasing function for $\omega\geq 4$. Therefore, it suffices to show that for $\omega_n = n!$, we have $f(\omega_n)\leq n-1$. Note that it then implies that $f(\omega) \leq f(\omega_n)\leq n-1$ for  $4\leq \omega\leq \omega_n = n!$.

We have 
    \begin{align}\label{eq:lnF<:1}
        \log \omega_n &=\log n! =\sum_{m=1}^n \log m \leq \int_{1}^{n+1} \log x dx \nonumber\\
        &= \left(-x \log e + x \log x\right)\Big|_{x=1}^{x=n+1} \nonumber\\
        &= (n+1)\log(n+1) - n \log e. 
    \end{align} 
    Next, note that for $n\geq 6$ we have $e^{n-2}>(n+1)^2$. This implies
\begin{align}\label{eq:lnF<:2}
    (n&+1)\log (n+1) - n \log e - (n-1)\log (n-1) \nonumber\\
    &= \log \frac{(n+1)^{n+1}}{(n-1)^{(n\!-\!1)} e^n} = \log \left(\!1\!+\!\frac{2}{n\!-\!1}\right)^{n\!-\!1} \!\!\!+ \log \frac{(n\!+\!1)^2}{e^{n}} \nonumber\\
    &\leq \log e^2 + \log \frac{n+1}{e^{n}} = \log \frac{(n+1)^2}{e^{n-2}} \leq \log 1 =0.
\end{align}
Using~\eqref{eq:lnF<:2} in~\eqref{eq:lnF<:1}, we get 
\begin{align}\label{eq:lnF<:3}
    \log \omega_n \leq (n-1)\log (n-1).
\end{align}
Similarly, we can write 
\begin{align}\label{eq:lnF>:1} 
        \log \omega_n &= \log n! = \sum_{m=1}^n \log m \geq  \int_{1}^{n} \log x dx \nonumber\\
        &= \left(-x \log e \!+\! x \log x\right)\Big|_{x=1}^{x=n} = n \log n -(n\!-\!1) \log e \nonumber\\
        &\geq  (n-1) \log \frac{n}{e}.
    \end{align}
Taking $\log(\cdot)$ from both sides of~\eqref{eq:lnF>:1}, we arrive at
\begin{align}\label{eq:lnF>:2} 
    \log \log \omega_n \geq \log (n-1) + \log \log \frac{n}{e} \geq \log (n-1),
\end{align}
where the last inequality holds for $n\geq 2e$. Dividing~\eqref{eq:lnF<:3} by~\eqref{eq:lnF>:2}, we get  
    \begin{align*}
        n-1\geq  \frac{\log \omega_n}{\log \log \omega_n} = f(\omega_n),
    \end{align*}
    which completes the proof. 
\end{proof}

\begin{proof}[Proof of Lemma~\ref{lm:input-properties}]
    
Since $x_{kn}\in\{0,1,\dots, k\}$, we have 
\begin{align*}
    \tx{n} &=3!\sum_{k=1}^{\infty}  
    \frac{\u[k]{n}+1}{(k+3)!} - \frac{1}{2} \leq 3!\sum_{k=1}^{\infty}  
    \frac{k+1}{(k+3)!} - \frac{1}{2} \nonumber\\
    &= 6\sum_{k=1}^{\infty}  
    \frac{(k+3)-2}{(k+3)!} - \frac{1}{2} \\
    &=6 \left[\sum_{k=1}^{\infty}  
    \frac{1}{(k+2)!} -2\sum_{k=1}^{\infty}  
    \frac{1}{(k+3)!}\right] -\frac{1}{2}\nonumber\\
    &=
    6\left[\left(e-\frac{5}{2}\right)-2 \left(e-\frac{8}{3}\right)\right]- \frac{1}{2} \\
    &=6\left(\frac{17}{6}-e\right)-\frac{1}{2} = 16.5-6e   =0.1903.
\end{align*}
Similarly, we can write
\begin{align*}
     \tx{n} &= 3!\sum_{k=1}^{\infty}  
    \frac{\u[k]{n}+1}{(k+3)!} - \frac{1}{2} \geq   3!\sum_{k=1}^{\infty}  
    \frac{1}{(k+3)!} - \frac{1}{2} \nonumber\\
    &= 
    6\left[e-\frac{8}{3}\right]- \frac{1}{2} 
    =6e- 16.5 = -0.1903.
\end{align*}

For the first moment of $\tx{n}$ we have
\begin{align*}
    \E\left[\sum_{k=1}^\infty \frac{U_{kn}+1}{(k+3)!}\right] &=
  \sum_{k=1}^\infty \frac{1}{(k+3)!}  \E[U_{kn}+1] \nonumber\\
  &\stackrel{\rm (a)}{=}  \sum_{k=1}^\infty \frac{1}{(k+3)!} \frac{k+2}{2} \stackrel{\rm (b)}{=} \frac{1}{2}\times \frac{1}{6} = \frac{1}{12}. 
\end{align*}
Note that ${\rm (a)}$ holds for $U_{kn}$ with uniform distribution over $\{0,1\dots, k\}$, and ${\rm (b)}$ follows from Lemma~\ref{lm:sum-fact}. 
Therefore, we get
\begin{align*}
    \E\left[\tx{n}\right] =  \E\left[3!\sum_{k=1}^\infty \frac{U_{kn}+1}{(k+3)!}-\frac{1}{2}\right] &= 3! \frac{1}{12}- \frac{1}{2}=0. 
\end{align*}

Similarly, we have 
\begin{align*}
    \E&\left[\left(\sum_{k=1}^\infty \frac{U_{kn}+1}{(k+3)!}\right)^2\right] = \E\left[\sum_{k,\ell=1}^\infty \frac{(U_{kn}+1)(U_{\ell n}+1)}{(k+3)! (\ell+3)!}\right] \\
    &= \E\left[\sum_{k=1}^\infty \frac{(U_{kn}+1)^2}{((k+3)!)^2} + 2\sum_{1\leq k <\ell} \frac{(U_{kn}+1)(U_{\ell n}+1)}{(k+3)! (\ell+3)!}\right]\\
    &\stackrel{\rm (a)}{=} \sum_{k=1}^\infty \frac{1}{((k+3)!)^2} \E[(U_{kn}+1)^2] \nonumber\\
    &\phantom{=}+ 2\sum_{1\leq k <\ell} \frac{1}{(k+3)! (\ell+3)!} \E[U_{kn}+1]\E[U_{\ell n }+1]\\
    &\stackrel{\rm (b)}{=} \sum_{k=1}^\infty \frac{1}{((k+3)!)^2} \frac{(k+2)(2k+3)}{6} \nonumber\\
    &\phantom{=}+ 2 \sum_{1\leq k <\ell} \frac{1}{(k+3)! (\ell+3)!}  \frac{k+2}{2} \frac{\ell+2}{2}\\
    &\stackrel{\rm (c)}{=}  \sum_{k=1}^\infty \frac{1}{((k+3)!)^2} \frac{k(k+2)}{12}
    + 
    \sum_{k=1}^\infty \frac{1}{((k+3)!)^2} \frac{(k+2)^2}{4}\nonumber\\
    &\phantom{=}+ 2 \sum_{1\leq k <\ell} \frac{1}{(k+3)! (\ell+3)!}  \frac{k+2}{2} \frac{\ell+2}{2}\\
    &= \sum_{k=1}^\infty \frac{1}{((k+3)!)^2} \frac{k(k+2)}{12}
    +  \frac{1}{4}\left(\sum_{k=1}^\infty \frac{k+2}{(k+3)!}\right)^2\\
    &\stackrel{\rm (d)}{=}   \alpha + \frac{1}{4}\left(\frac{1}{6}\right)^2
\end{align*}
where $\alpha =\frac{1}{48}(16 I_0(2)-16 I_1(2)-11)\approx 0.000482816$, and $I_1(z)$ is the modified Bessel function of the first kind.  Note that the step in~${\rm (a)}$ holds since $U_{kn}$ and $U_{\ell n}$ are independent, ${\rm (b)}$ follows from the fact that $U_{kn}$ is uniformly distributed over $\{0,1,\dots, k\}$, the equality in~${\rm (c)}$ holds for $(k+2)(2k+3)/6 = k(k+2)/6 + (k+2)^2/4$, and ${\rm (d)}$ follows from Lemma~\ref{lm:sum-fact}. 

Therefore, from~\eqref{eq:channel-input} we get
\begin{align*}
    \E[|\tx{n}|^2] &\!=\!  36\E\hspace{-1pt}\left[\!\left(\sum_{k=1}^\infty \frac{U_{kn}+1}{(k+3)!}\right)^{\!\!2}\hspace{-1pt}\right] \!\!-\! 6 \E\hspace{-1pt}\left[ \sum_{k=1}^\infty \frac{U_{kn}+1}{(k+3)!}\right] \hspace{-1pt}\!+\! \frac{1}{4} \\
    &= 36 \alpha + \frac{1}{4} -\frac{1}{2} + \frac{1}{4} = 36\alpha = 0.0173814.
\end{align*}
This completes the proof of the lemma. 
\end{proof}

\end{document}